\newcommand{\Section}[1]{\section{#1} \setcounter{equation}{0}}
\newcommand{\mathbi}[1]{\boldsymbol{#1}} 
\newcommand{\eps}{\varepsilon}     
\newcommand{\vertspace}{\vspace{6pt}}
\newcommand{\bv}{\mathbi{\beta}}
\newcommand{\estbv}{\mathbi{\beta^*}}
\newcommand{\Yv}{\mathbi{Y}}
\newcommand{\Xv}{\mathbi{X}}
\newcommand{\xv}{\mathbi{x}}
\newcommand{\wv}{\mathbi{w}}
\newcommand{\vv}{\mathbi{v}}
\newcommand{\jv}[1]{\mathbi{j}_{#1}}
\newcommand{\K}{\mathrm{K}}
\newcommand{\Kb}{\mathrm{K}(\bvv)}
\newcommand{\qv}{\mathbi{q}}
\newcommand{\bvv}{\mathbi{b}}
\newcommand{\GV}{\G_\V}
\newcommand{\invDb}{\D_{\bvv}^{-1}}
\newcommand{\Db}{\D_{\bvv}}
\newcommand{\E}[1]{E\left\{#1\right\}}
\newcommand{\En}[1]{E_{\nuv}\left\{#1\right\}}
\newcommand{\biasn}[1]{\Updelta_{\nuv}\left\{#1\right\}}
\renewcommand{\S}{\mathrm{S}}
\newcommand{\Seps}{\S_{\epsv}}
\newcommand{\epsv}{\boldsymbol{e}} 
\newcommand{\lambdav}{\boldsymbol{\uplambda}}
\newcommand{\lambdas}{\uplambda}
\newcommand{\muv}{\boldsymbol{\mu}}
\newcommand{\nuv}{\boldsymbol{\upnu}} 
\newcommand{\nus}{\upnu}
\newcommand{\gv}{\boldsymbol{\upgamma}}
\newcommand{\gs}{\upgamma}
\newcommand{\s}{\sigma}                         
\newcommand{\Sgn}{\mathrm{\Sigma_{\nuv}}}
\newcommand{\InvSgn}{\mathrm{\Sigma_{\nuv}^{-1}}}
\newcommand{\Cov}[1]{{Cov}\{#1\}}
\newcommand{\Covn}[1]{{Cov}_{\nuv}\left\{#1\right\}}
\newcommand{\Di}[1]{D\left\{#1\right\}}
\newcommand{\Din}[1]{D_{\nuv}\{#1\}}
\newcommand{\MSEn}[1]{{MSE}_{\nuv}\{#1\}}
\newcommand{\V}{\mathrm{V}}
\newcommand{\F}{\mathrm{F}}
\newcommand{\W}{\mathrm{W}}
\newcommand{\lm}{l_{\small \mathrm{m}}}
\newcommand{\lr}{l_{\small \mathrm{r}}}
\newcommand{\I}[1]{\mathrm{I}_{#1}}   
\newcommand{\pitem}{$(\arabic*)$}
\newenvironment{enumeratep}
{\begin{enumerate}[label=\pitem]}
	{\end{enumerate}}
\newcommand{\est}[1]{\tilde{#1}}     
\newcommand{\dest}[2]{{\tilde{#1}}_{#2}}
\newcommand{\nndest}[1]{\tilde{#1}_{+}}
\newcommand{\nesX}[1]{\breve{\s}_{#1}^2(\Xv)}
\newcommand{\best}[1]{\mathring{#1}}  
\newcommand{\bess}[1]{\mathring{{\s}}_{#1}^2}  
\newcommand{\besX}[1]{{\mathring{{\s}}_{#1}}^2(\Xv)} 
\newcommand{\besnX}[1]{{\mathring{{\s}}_{#1}}^2} 
\newcommand{\NnX}{\mathcal{N}_n(\F\bv, \Sgn)}
\newcommand{\NX}{\mathcal{N}}
\newcommand{\bluebn}{\bv_{\nuv}^{*}}
\newcommand{\rwn}{\wv_{\nuv}^{*}}
\newcommand{\blupY}{\Yv^{*}}
\newcommand{\blupZ}{\mathbi{Z}^{*}}
\newcommand{\blupZn}{\mathbi{Z}_{\nuv}^{*}}
\newcommand{\blupYn}{\Yv_{\!\!\nuv}^{*}}
\newcommand{\Ts}{\mathds{T}}
\newcommand{\N}{\mathds{N}}
\newcommand{\Nz}{\mathds{N}_0}
\newcommand{\Rv}[1]{\mathds{R}^{#1}}
\newcommand{\Ys}{\Upupsilon}
\newcommand{\bigO}{\mathcal{O}}
\newcommand{\Rm}[2]{\mathds{R}^{#1 \times #2}}
\newcommand{\LS}[1]{\mathscr{L}(#1)}
\newcommand{\Mf}{\mathrm{M_{F}}}
\newcommand{\Mv}{\mathrm{M_{V}}}
\newcommand{\A}{\mathrm{A}}
\newcommand{\B}{\mathrm{B}}
\newcommand{\D}{\mathrm{D}}
\newcommand{\Hs}{\mathrm{H}}
\newcommand{\dT}{\mathbb{T}}
\newcommand{\dt}{\mathbb{t}}
\newcommand{\dk}{\rho}
\newcommand{\In}{\mathrm{I}_n}
\newcommand{\Il}{\mathrm{I}_l}
\newcommand{\Ov}[1]{\boldsymbol{0}_{#1}}
\newcommand{\Om}[2]{\boldsymbol{0}_{#1 \times #2}}
\newcommand{\G}{\mathrm{G}}
\newcommand{\dvv}{\mathbi{d}}
\newcommand{\dv}[1]{\mathbb{#1}}
\newcommand{\dGV}{\dv{G}_{\V}}
\newcommand{\dHV}{\dv{H}_{\V}}
\newcommand{\dU}{\dv{U}}
\newcommand{\dW}{\dv{W}}
\newcommand{\blockmatrix}[4]{
	\left(\begin{matrix}
		#1 & \; #2 \\
		#3 & \; #4 \\
	\end{matrix}
	\right)
}
\newcommand{\smallblockmatrix}[4]{
	\left(\begin{smallmatrix}
		#1 & #2 \\
		#3 & #4 \\
	\end{smallmatrix}
	\right)
}
\newcommand{\norm}[1]{\left\Vert#1\right\Vert}
\newcommand{\gnorm}[2]{\norm{#1}_{#2}}
\newcommand{\tr}{\mathrm{tr}}
\newcommand{\diag}[1]{\mathrm{diag}\left\{#1\right\}}
\newcommand{\Sum}[2]{\sum\limits_{#1=1}^{#2}}
\newcommand{\SumZ}[2]{\sum\limits_{#1=0}^{#2}}
\newcommand{\pref}[1]{p.~\!\pageref{#1}}
\begin{document}

\title{Estimating variances in time series  \\
	linear regression models using empirical BLUPs \\
	and convex optimization 
\thanks{This work was supported by the Slovak Research and Development 
Agency under the contract No. APVV-17-0568, the Scientific Grant Agency 
of the Slovak Republic (VEGA), VEGA grant No.1/0311/18 and the Internal Research Grant System of 
Faculty of Science, P.~J. \v{S}af\'arik University in Ko\v{s}ice (VVGS PF UPJ\v{S}) --- project 
VVGS-PF-2018-792.}
}
%

\subtitle{}

\titlerunning{Estimating variances in FDSLRMs via EBLUPs and convex optimization}        

\author{Martina Han\v{c}ov\'a  \and
        Gabriela Voz\'arikov\'{a}\and
        Andrej Gajdo\v s \and
        Jozef Han\v c
}


\institute{M. Han\v{c}ov\'{a} \and G. Voz\'arikov\'a \and A. 
Gajdo\v{s} 
\at
	Institute of Mathematics, Pavol Jozef \v{S}af\'{a}rik University, 
	Ko\v{s}ice, Slovakia \\
	Tel.: +421-55-2342213, Fax: +421-55-6222124\\
	\email{martina.hancova@upjs.sk}           
	\and
	J. Han\v{c}\at
	Institute of Physics,  Pavol Jozef \v{S}af\'{a}rik University, 
	Ko\v{s}ice, Slovakia
}

\date{Received: date / Accepted: date}

\maketitle

\vspace{-12pt}
\begin{abstract}
\leavevmode\\
We propose a two-stage estimation method of variance components in time series 
models known as FDSLRMs, whose observations can be described by a linear mixed model (LMM). 
We based estimating variances, fundamental quantities in a time series forecasting 
approach called kriging, on the empirical (plug-in) best linear unbiased predictions of 
unobservable random components in FDSLRM. The method, providing invariant non-negative quadratic 
estimators, can be used for any absolutely continuous probability distribution of time series data. 
As a result of applying the convex optimization and the LMM methodology, we resolved two problems 
--- theoretical existence and equivalence between least squares estimators, non-negative (M)DOOLSE, 
and maximum likelihood estimators, (RE)MLE, as possible starting points of our method and a 
practical lack of computational implementation for FDSLRM. 
As for computing (RE)MLE in the case of $ n $ observed time series values, we also discovered a new 
algorithm of order $\bigO(n)$, which at the default precision is $10^7$ times more accurate and 
$n^2$ times faster than the best current Python(or R)-based computational packages, namely 
CVXPY, CVXR, nlme, sommer and mixed. We illustrate our results on three real data sets --- 
electricity consumption, tourism and cyber security --- which are easily available, reproducible, 
sharable and modifiable in the form of interactive Jupyter notebooks.

\keywords{finite discrete spectrum linear regression model \and linear mixed model \and double 
ordinary least squares estimators 
	\and maximum likelihood estimators \and efficient computational algorithms}

%
%
%

\subclass{62M10 \and 62J12 \and 91B84 \and 90C25}
\end{abstract}

\Section{Introduction}
\label{sec:intro}

The need to obtain sufficiently accurate predictions for facilitating and improving decision making 
becomes an integral part not only of science, industry or economy but also in many other human 
activities. Our recent article \cite{gajdos_kriging_2017} summarizes the guiding methodology and 
corresponding references dealing with \textit{kriging} for time series econometric forecasting as 
one of the advanced alternative approaches to the most popular Box-Jenkins methodology 
(\cite{box_time_2015}, \cite{shumway_time_2017}).

The key idea of the time series kriging  is to model the given time series data in an appropriate 
general class of linear regression models (LRMs) \label{abv-LRM} and subsequently finding 
\textit{the best linear 
unbiased predictor} \label{abv-BLUP} --- the BLUP which minimizes the mean squared error (MSE) 
\label{abv-MSE} of prediction among all linear unbiased predictors, see e.g. 
\cite{stulajter_predictions_2002}, \cite{kedem_regression_2005}, \cite{christensen_plane_2011}, 
\cite{brockwell_introduction_2016}.
 
In the frame of kriging, we investigate theoretical features and econometric applications of a 
class of time series models called \textit{finite discrete spectrum linear regression models} or 
shortly FDSLRMs. The FDSLRM class was introduced in 2002-2003 by \v{S}tulajter 
(\cite{stulajter_predictions_2002}, \cite{stulajter_mse_2003}) as a direct extension of classical 
(ordinary) regression models (see e.g. \cite{christensen_plane_2011}, 
\cite{hyndman_forecasting:_2018}).

The FDSLRM has mean values (trend) given by linear regression and random components (error terms) 
are represented as a sum of a linear combination of uncorrelated zero-mean random variables and 
white noise which can be interpreted in terms of the finite discrete spectrum 
\cite{priestley_spectral_2004}. 

Formally the FDSLRM can 
be presented as 
\begin{equation} \label{def:FDSLRM}
	X(t)= \Sum{i}{k}\beta_i f_i(t)+ \Sum{j}{l} Y_j v_j(t)+w(t); \, t \in \Ts,  
\end{equation}
\noindent where 
\begin{description}
	\item[] $\Ts$ representing the time domain is a countable subset of the	real line $\Rv{}$,
	\item[] $k$ and $l$ are some fixed non-negative integers, i.e. $k,l \in \Nz $, 
	\vertspace
	\item[] $\bv = (\beta_1, \beta_2, \ldots, \beta_k)' \in \Rv{k}$ is a vector of regression 
	parameters,
	\item[] $\Yv = (Y_1,Y_2,\ldots,Y_l)'$ is an unobservable $l \times 1$ random vector with zero 
	mean vector $\E{\Yv}=\Ov{l}$, and with an $l\times l$ diagonal covariance matrix 
	\mbox{$\Cov{\Yv} = \diag{\s_j^2}$}, where  $\s_j^2 \in \Rv{}_+$  are non-negative real numbers,
	\vertspace
	\item[] $f_i(.); \, i=1,2,\ldots,k$ and  $v_j(.); \, j=1,2,\ldots,l$ are real functions defined 
	on $\Rv{}$, 
	\item[] $w(.)$ stands for white noise uncorrelated with $\Yv$ and having a positive dispersion 
	$\Di{w(t)}=\s_0^2 \in \Rv{}_{++}$.
\end{description}

Typically, due to the nature of time series data collection, the most frequently considered time 
domain $\Ts$ is the set of natural numbers $\N=\{1,2, \ldots\}$. FDSLRM variance parameters,  which 
are fundamental quantities in kriging, are commonly described by one vector $\nuv   
\equiv (\nus_0, \nus_1, \ldots, \nus_l)' = (\s_0^2,\s_1^2,\ldots,\s_l^2)'$, an element of the 
parametric space $\Ys = (0, \infty) \times [ 0, 
\infty)^{l}$ or $\Ys = \Rv{}_{++} 
\times \Rv{l}_+$  for short.

The principal goal of our paper is to present an alternative estimation method for 
FDSLRM variances, one of the time series kriging steps \cite{gajdos_kriging_2017}. Our approach 
represents an improved two-stage modification of the previously developed method of natural 
estimators 
(NE) \cite{hancova_natural_2008} which is now based on the idea of empirical (plug-in) BLUPs. We 
will refer to our new method as EBLUP-NE for short. 

The structure of the paper is organized as follows. Section 2 includes several important notes 
about theoretical methods and computational tools for FDSLRM kriging offered by closely 
connected mathematical branches and computational research. This background became the strong basis 
for 
our research.

Section 3 describes the first stage of our EBLUP-NE method. It develops the definition and 
computational form of EBLUP-NE at known variance parameters $\nuv$ using two special matrices known 
in linear algebra as the Schur complement and Gram matrix \cite{zhang_schur_2005}. The method 
requires very minimal distributional assumptions on a finite FDLSRM observation $\Xv = 
(X(1),\ldots,X(n))',$ $n \in \N$. It must have only an absolutely 
continuous probability distribution with respect to some $\sigma$-finite measure. In section 3 we 
also derive the basic statistical properties of EBLUP-NE at known $\nuv$. 

Section 4 is devoted to the second stage of EBLUP-NE method which consists in replacing true,  
in practice unknown variance parameters $\nuv$ by another, appropriate FDSLRM estimates of $\nuv$ 
obtained by maximum likelihood or least squares \cite{gajdos_kriging_2017}. Since the theory and 
computational aspects of these estimation procedures in FDSLRM fitting based on the projection 
theory in Hilbert spaces are more than 10 years old (\cite{stulajter_predictions_2002}, 
\cite{stulajter_estimation_2004}, \cite{hancova_natural_2008}), we revisit and update them in 
the light of recent advances in closely related linear mixed modeling and convex optimization.

In the following fifth section, we illustrate theoretical results of the paper and the performance 
of 
EBLUP-NE on three real time series data sets --- \textit{electricity consumption}, \textit{tourism} 
and \textit{cyber security}. Section 6 presents the conclusions of the paper. For the sake of paper 
readability, we moved very technical details and proofs to the Appendix. In the Appendix we also 
report a list of acronyms and abbreviations used in the paper (tab. \ref{tab:acronyms}). 

Since the paper is aimed at statisticians, analysts, econometricians and data scientists applying 
time 
series and forecasting methods, our notation is standard for time series analysis and prediction 
using linear regression models (\cite{stulajter_predictions_2002}, \cite{kedem_regression_2005}, 
\cite{brockwell_time_2009}). 
Sets  like $\Rv{}, \Rv{}_{+}, \Rv{}_{++}$ are labeled as it is common in convex optimization 
\cite{boyd_convex_2009}. 

\Section{Theoretical and computational bases for FDSLRM kriging}
\label{sec:basis}

Let us recall some important notes about time series FDSLRM kriging and its 
connection to other math branches and current computational technology. 

From a modeling point of view, if we consider any time series model $X(.)$ in the additive form 
$X(t)=m(t)+\eps(t);t \in \Ts$ whose mean value function $m(.)$ is a	real function on $\Rv{}$ 
expressible by a functional series (e.g. Taylor or Fourier series) and error term 	$\eps(.)$ is a 
mean-zero stationary process, then according to the spectral representation theory of time	series 
(\cite{priestley_spectral_2004},\cite{percival_spectral_2009}) $X(.)$ can be approximated 
arbitrarily 
closely by FDSLRM. Therefore, from a practice perspective, FDSLRMs can be potentially applied 
in many practical situations. 

In econometric applications of FDSLRM kriging, we almost always have only one realization of time 
series $X(.)$ and we do not know the mean-value parameters $\bv \in \Rv{k}$ nor the variance 
parameters $\nuv \in \Ys $.  From the view of time series analysis, the problem of estimating 
variances $\nuv$ belongs to the class of problems concerning covariance matrix estimation with one 
realization \cite{wu_covariance_2012}.
	
In addition, any finite FDSLRM observation $\Xv$ satisfies a 
special type of linear mixed model (LMM) of the form
\begin{equation} \label{eq:FDSLRMobservation}
	\begin{gathered}
	\Xv = \F \bv +\V \Yv + \wv \text{ with } \E{\wv}=\Ov{n}, \Cov{\wv} = \s_0^2\In,  \\
	\Cov{\Yv,\wv} = \Om{l}{n}, \Cov{\Xv} \equiv \Sgn = \s_0^2\In+\V\D\V',
	\end{gathered}
\end{equation} 
where design matrices $\F = \{F_{ti}\} =\{f_i(t)\}, \V=\{V_{tj}\} =\{v_j(t)\} \text{ for } t=1, 2, 
\ldots,n$; $i=1,2,\ldots,k$; $j=1,2\ldots,l$ and random vector $\wv =(w(1), w(2), \ldots, w(n))'$ 
is a finite $n$-dimensional white noise observation. 

The fundamental property \eqref{eq:FDSLRMobservation} of FDSLRM allows us 
to apply convenient LMM mathematical techniques for FDSLRM fitting and 
forecasting (\cite{pinheiro_mixed-effects_2009}, 
\cite{christensen_plane_2011}, \cite{witkovsky_estimation_2012},  \cite{demidenko_mixed_2013}, 
\cite{rao_small_2015}, \cite{covarrubias-pazaran_genome-assisted_2016}, 
\cite{singer_graphical_2017}) together with up-to-date LMM software packages --- \textit{nlme}, 
\textit{sommer} written in R (\cite{pinheiro_nlme:_2018}, \cite{covarrubias-pazaran_sommer:_2019}), 
MATLAB function \textit{mixed} \cite{witkovsky_mixed_2018} or SAS package \textit{Proc MIXED} 
\cite{sasinstituteinc._sas/stat_2018}. As the first practical consequence of the LMM theory, to 
have identifiable FDSRLMs under assumptions of multivariate normal distribution \cite[sec.~3.2, thm 
11]{demidenko_mixed_2013}, in the paper, we will assume for the block 
matrix $(\F \;\V)$ and for the number of data $n$ in $\xv$, a realization of $\Xv$ in real 
situations, the following sufficient condition
\begin{equation}
\emph{identifiability}: \text{the full column rank } r(\F\;\V) = k+l \text{ and } n>k+l. 
\end{equation}

On the other hand, the standard maximum likelihood or least squares estimates of $\nuv$ in FDSLRM 
are optimization problems. Our motivation to explore estimating FDSLRM variances in the frame of 
convex optimization lays on the fact that its mathematical tools and efficient, very reliable 
computational interior-point methods became important or fundamental tools in many other branches 
of mathematics \cite{boyd_convex_2009} like the design of experiments, high-dimensional data, 
machine learning or data mining. 

Inspired by essential and recent works on convex optimization (\cite{boyd_convex_2009}, 
\cite{bertsekas_convex_2009}, \cite{koenker_convex_2014}, \cite{cornuejols_optimization_2018}, 
\cite{agrawal_rewriting_2018}) we prove new theoretical relations among existing FDSLRM estimators 
in the extended parametric space $\Ys = \Rv{l+1}_+$ for the orthogonal version of FDSLRM, most used 
in 
real time series applications. Moreover, we show how to apply the latest convex optimization 
packages CVXPY and CVXR based on disciplined convex programming, written in Python 
\cite{diamond_cvxpy:_2016} and R \cite{fu_cvxr:_2019}, for estimating $\nuv$. Finally, we also 
focus on the development of a new fast and accurate computational optimization algorithm for 
estimating variances in FDSLRM. 
 
As for computational technology, our time series calculations are carried out using free open 
source software based on the R statistical language and packages 
(\cite{rdevelopmentcoreteam_r:_2019}, \cite{mcleod_time_2012}), the Scientific Python with the 
SciPy \label{abv-SciPy} ecosystem (\cite{jones_scipy:_2001}, \cite{oliphant_python_2007}) and free 
Python-based
mathematics software SageMath \label{abv-SageMath}(\cite{stein_sage_2019}, \cite{hogben_sage_2013}, 
\cite{zimmermann_computational_2018}), an open source alternative to the well-known commercial 
computer algebra systems Mathematica or Maple.  The simultaneous use of R, SciPy and SageMath 
provides us valuable cross-checking of our computational results.

All our computational algorithms and results are easily readable, sharable, reproducible,  and 
modifiable 
thanks to open-source Jupyter technology \cite{kluyver_jupyter_2016}. In particular, we present our 
results in the form of Jupyter notebooks, dynamic HTLM documents integrating prose, code and 
results similarly as Mathematica notebooks, stored in our free available GitHub repository 
(\cite{gajdos_fdslrm_2019}, \cite{adar_coding_2014}). 

The Jupyter notebooks, detailed records of our computing with explaining narratives, can be seen or 
studied as static HTML pages via \textit{Jupyter nbviewer} (\url{https://nbviewer.jupyter.org/}, 
\cite{kluyver_jupyter_2016}) or 
interactively as live HTML documents 
using \textit{Binder}  (\url{https://mybinder.org/}, \cite{projectjupyter_binder_2018}) where the 
code is executable. The way and presentation of our computing with real data are inspired by works 
\cite{brieulle_computing_2019}, \cite{weiss_scientific_2017}.

\Section{The BLUP estimation method for $\nuv$}

\subsection{Definition and computational form of estimators}
\label{sec:compform}

In the paper \cite{hancova_natural_2008}, we proposed the method of always non-negative natural 
estimators (NE) \label{abv-NE} of FDSLRM variances $\nuv$. The main idea behind NE came from the 
fact that $\s_j^2 
= \Cov{Y_j} = \E{Y_j^2}; j = 1, \ldots, l$. Therefore, if random vector $\Yv$ in \eqref{def:FDSLRM} 
was known, the natural estimate of $\s_j^2$ would be just $Y_j^2$. This initial consideration was 
identical with Rao's estimates known as MINQUE (\cite[sec.~5.1]{rao_estimation_1988}, 
\cite[sec.~12.7]{christensen_plane_2011}).

To get sufficiently simple, explicit analytic expressions available for further theoretical 
study, we predicted the random vector $\Yv$ by the ordinary least squares method leading to the 
following linear predictor of $\Yv$ based on $\Xv$
$$\breve{\Yv} = \W^{-1}\V'\Mf\Xv; \; \W =\V'\Mf\V \in \Rm{l}{l}.$$
Matrix $\Mf = (\In-\F(\F'\F)^{-1}\F') \in \Rm{n}{n}$ represents (\cite{hancova_natural_2008}) the 
orthogonal projector onto the orthogonal complement of the column space of $\F$ and matrix 
$\W\succ0$ ($\succ0$ means positive definiteness)
is known in linear algebra and statistics \cite[chap. 6]{zhang_schur_2005} as the Schur complement 
of $\F'\F$ in the block matrix $\smallblockmatrix{\F'\F}{\F'\V}{\V'\F}{\V'\V}$.

However, from perspective of the general theory of the  best linear unbiased prediction in LMM
(see e.g. \cite[chap. 12]{christensen_plane_2011}), we could 
consider natural estimators of $\nuv$ in FDSLRM based on the BLUP 
$\blupYn$ of $\Yv$. Then intuitively, it seems reasonable to assume that the BLUP as the 
unbiased linear predictor with minimal MSE could lead to better estimators of $\nuv$. This 
intuition will be confirmed by our following derivations. 

The previous heuristic consideration motivates the following definition. Let us consider FDSLRM 
\eqref{def:FDSLRM} and its observation \eqref{eq:FDSLRMobservation}. Then 
estimators in the form
\begin{equation} \label{def:BLUPNE}
	\besX{j}=(\blupYn)_j^2; \:\blupYn \textrm{  is the BLUP of  } \Yv \textrm{ based on } \Xv 
\end{equation}
are called \textit{natural estimators of $\nuv$ based on the BLUP of $\Yv$} or BLUP-NE for short.

\begin{remark}
\leavevmode\\
\small
Following the analogy with original NE in \cite{hancova_natural_2008}, for the first 
component $\nus_0 = \s_0^2$ of $\nuv$ we could take as a BLUP based natural estimator the sum of 
squares of white noise residuals $\rwn=\Xv-\F\bluebn-\V\blupYn$ divided by the number of degrees of 
freedom in FDSLRM
\begin{equation*}
\besX{0} = \frac{1}{n-k-l} [\Xv-\F\bluebn-\V\blupYn]'[\Xv-\F\bluebn-\V\blupYn],
\end{equation*}
where $\bluebn$ is the best linear unbiased estimator (BLUE) \label{abv-BLUE} of $\bv$. In the LMM 
framework, FDSLRM 
white noise residuals $\rwn$ are also called \textit{conditional residuals} 
\cite{singer_graphical_2017}.

 But now our intuition fails. The computational form and properties of 
 $\blupYn$ depend on all variance components $\nuv$. That dependency also appears in the proposed 
 estimator $\besX{0}$ which causes its biasedness even in a simpler special case of FDSLRM 
 (orthogonality condition, see \eqref{eq:orthogonality}). Therefore, further we do not pay any 
 special 
 attention to its form and properties. As a natural estimator of $\s_0^2$ we will keep the original 
 unbiased invariant quadratic NE  $\nesX{0}$ of $\s_0^2$, eq. (2.2) in \cite{hancova_natural_2008}.
\end{remark}

The BLUP $\blupYn$ of $\Yv$ together with the BLUE $\bluebn$ of $\bv$ can be obtained from 
celebrated Henderson's mixed model equations, MME for short (for MME in the current framework of 
LMM see e.g. \cite[sec.~12.3]{christensen_plane_2011} or \cite[sec.~2]{witkovsky_estimation_2012}). 
MME have the following form in the case of FDSLRM 
\begin{equation} \label{eq:MME}
	\blockmatrix{\F'\F}{\F'\V}{\V'\F}{\dGV} \cdot
	\begin{pmatrix} \bluebn \\ \blupYn \\ \end{pmatrix} = 
	\begin{pmatrix} \F'\Xv \\ \V'\Xv \\ \end{pmatrix},
\end{equation}
where $\dGV = \GV + \s_0^2\D^{-1} = \V'\V+\s_0^2\D^{-1}$ and $\GV = \V'\V \in \Rm{l}{l}$ is the 
Gram 
matrix (\cite[sec.~10.1]{boyd_introduction_2018}) for columns of design matrix $\V$.

Under the assumption of the full column rank $r= k+l$ of matrix $(\F \;\V)$, when FDSRLM is  
identifiable, and using the so-called \textit{Banachiewicz inversion formula} for the inverse of a  
$2\times 2$  partitioned (block) matrix (see e.g. \cite[sec.~6.0.2]{zhang_schur_2005} or 
\cite[sec.~2.1]{hancova_natural_2008}), we can easily prove the existence and the following form  
of the inverse to the block matrix in \eqref{eq:MME}\\
$$\blockmatrix{\F'\F}{\F'\V}{\V'\F}{\dGV}^{-1} = 
\blockmatrix{(\F'\F)^{-1}}{\Om{k}{l}}{\Om{l}{k}}{\Om{l}{l}}+
\begin{pmatrix} -(\F'\F)^{-1} \F'\V \\ \Il \\ \end{pmatrix}
\dW^{-1}\left( -\V'\F (\F'\F)^{-1}  \; \Il \right),
$$\\
where $\dW = \W + \s_0^2\D^{-1} = \V'\Mf\V+ \s_0^2\D^{-1} \succ 0$ is again the Schur complement of 
$\F'\F$ but now in the block matrix $\smallblockmatrix{\F'\F}{\F'\V}{\V'\F}{\dGV}$. We also call 
matrix $\dW$ more specifically as \textit{the extended Schur complement} determined by variances 
$\nuv$. 

Substituting the last result for the inverse into MME \eqref{eq:MME} and rearranging, we get for  
$\blupYn$ (symbols $\bullet$ denote blocks not needed in deriving)\\
$$\begin{pmatrix} \bluebn \\ \blupYn \\ \end{pmatrix} = 
{\blockmatrix{\F'\F}{\F'\V}{\V'\F}{\dGV}}^{-1} \cdot
\begin{pmatrix} \F'\Xv \\ \V'\Xv  \end{pmatrix} =
\blockmatrix{\bullet}{\bullet}{-\dW^{-1}\V'\F(\F'\F)^{-1}}{\dW^{-1}}\cdot
\begin{pmatrix} \F'\Xv \\ \V'\Xv \\ \end{pmatrix}$$

\begin{equation}\label{eq:blupY}
	\blupYn = 
	\dW^{-1}\V'(\In-\F(\F'\F)^{-1}\F')\Xv=
	\dW^{-1}\V'\Mf \Xv
\end{equation}
Denoting matrix $\dW^{-1}\V'\Mf \in \Rm{l}{n}$  as $\dT$, we just derived the following 
computational form of BLUP-NE estimators.

\begin{proposition}[the computational form of BLUP-NE]
\leavevmode\\
Let us consider the following LMM model for FDSLRM observation $\Xv$
\begin{gather*}
	\Xv=\F \bv +\V \Yv+\wv, \quad \E{\wv}=\Ov{n}, \quad \Cov{\wv}=\s_0^2\In, \\
	\Cov{\Yv}=\diag{\s_j^2},\quad \Cov{\Yv,\wv}=\Om{l}{n}.
\end{gather*}
Then BLUP-NE estimators $\bess{1},\ldots,\bess{l}$ of parameters $\s_1^2,\ldots,\s_l^2$ 
are given by
\begin{equation} \label{eq:compformessj}
	\besX{j}={(\blupY)}^2_j=(\dT \Xv)_j'{(\dT \Xv)}_j
	=\Xv'\dt_j\dt_j'\Xv;\text{ }j=1,\ldots,l,
\end{equation}
where $\dt_j=(\dT_{j1},\dT_{j2},\ldots,\dT_{jn})'$ are rows of matrix $\dT$ and 
$\dT=\dW^{-1}\V'\Mf$ with $\dW= \diag{\s_0^2/\s_j^2}+\V'\Mf\V$. \\
\end{proposition}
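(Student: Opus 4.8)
The plan is to recognize that this Proposition is essentially a bookkeeping statement: almost all of its content has already been established in the display chain leading up to it, so the proof reduces to three short steps --- invoking the closed form of $\blupYn$, specializing the extended Schur complement to the diagonal case, and converting squared vector components into a quadratic form in $\Xv$.

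First I would recall that, under the standing identifiability assumption $r(\F\;\V)=k+l$, the Banachiewicz inversion formula applied to the block matrix in the mixed model equations \eqref{eq:MME} produces the explicit block inverse displayed just before the Proposition, from which \eqref{eq:blupY} gives
\[
\blupYn=\dW^{-1}\V'\Mf\Xv,\qquad \dW=\V'\Mf\V+\s_0^2\D^{-1}.
\]
Here $\dW$ is invertible because $\W=\V'\Mf\V\succ0$ and $\s_0^2\D^{-1}\succ0$, so $\dW\succ0$. Next I would use the hypothesis $\D=\Cov{\Yv}=\diag{\s_j^2}$ to write $\s_0^2\D^{-1}=\diag{\s_0^2/\s_j^2}$, which is precisely the form of $\dW$ claimed in the statement; setting $\dT=\dW^{-1}\V'\Mf\in\Rm{l}{n}$ then yields $\blupYn=\dT\Xv$.

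The last step is to unwind the definition \eqref{def:BLUPNE}. For each $j$ we have $\besX{j}=(\blupYn)_j^2=(\dT\Xv)_j^2$, and if $\dt_j=(\dT_{j1},\ldots,\dT_{jn})'$ denotes the transpose of the $j$-th row of $\dT$, then $(\dT\Xv)_j=\dt_j'\Xv$ is a scalar (so the transpose appearing in \eqref{eq:compformessj} is cosmetic) and hence $\besX{j}=(\dt_j'\Xv)'(\dt_j'\Xv)=\Xv'\dt_j\dt_j'\Xv$, which is \eqref{eq:compformessj}. I expect the only genuinely delicate point to be the invertibility of $\dW$: it rests on $\D\succ0$, i.e. on all $\s_j^2>0$, so that $\D^{-1}$ --- and hence the MME-based derivation of $\blupYn$ --- makes sense; on the boundary of $\Ys$ where some $\s_j^2$ vanishes, the symbol $\diag{\s_0^2/\s_j^2}$ must be read by the usual limiting convention, which is the reason the Proposition is phrased for the LMM with $\D$ entering only through $\D^{-1}$. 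Beyond that, everything is direct substitution and needs no further argument.
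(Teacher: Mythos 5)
Your proposal is correct and follows essentially the same route as the paper: the Proposition is the summary of the derivation immediately preceding it (Banachiewicz inversion of the MME block matrix, giving $\blupYn=\dW^{-1}\V'\Mf\Xv$, then squaring components to get the quadratic forms $\Xv'\dt_j\dt_j'\Xv$). Your closing caveat about singular $\D$ is also the paper's own point, handled there in the subsequent remark via the alternative form $\dT=\D\dU^{-1}\V'\Mf$.
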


\begin{remark} 
\leavevmode\\
\small	
It is straightforward to extend our computational form to the case when some 
variances $\s_j^2$ are zero or in other words when matrix $\D$ is singular. 	
For a singular $\D$, MME \eqref{eq:MME} have the alternative form \cite{witkovsky_estimation_2012}
\begin{equation} \label{eq:MME2}
	\blockmatrix{\F'\F}{\F'\V\D}{\V'\F}{\dHV } \cdot
	\begin{pmatrix} \bluebn \\ \blupZn \\ \end{pmatrix} = 
	\begin{pmatrix} \F'\Xv \\ \V'\Xv \\ \end{pmatrix},
\end{equation}
where $\dHV = \GV\D  + \s_0^2 \Il = \V'\V\D+\s_0^2 \Il $ and $\blupY = \D\blupZ$.

Applying the same argument with the Banachiewicz inversion formula, we get the second version of 
the computational form of $\dT$ determining $\blupYn$
\begin{equation} \label{eq:compformessj2}
 \dT= \D \dU^{-1} \V'\Mf \text{ with } \dU=\W\D+ \s_0^2\Il = \V'\Mf\V \D + \s_0^2\Il.
\end{equation}
As opposed to $\dW, \dGV$, matrices $\dU, \dHV$ always exist under identifiability assumptions of 
our FDSLRM. 
Simultaneously both of $\dU, \dHV$ are also always invertible, since both of $\det \dU, \det \dHV$ 
are nonzero for any $\D$ in our FDSLRM as it is shown in the Appendix. In the case of a nonsingular 
$\D$ the mentioned matrices are connected via following relationships
\begin{equation}\label{eq:HV GV W U}
	\dHV = \dGV \D, \; \dGV^{-1} = \D \dHV^{-1}, \quad \dU = \dW \D, \; \dW^{-1} = \D\dU^{-1}.
\end{equation}
The version \eqref{eq:MME2} of MME is also preferred in numerical calculations 
\cite{witkovsky_estimation_2012}, since it can handle not only a singular $\D$ but also a very 
ill-conditioned $\D$ appearing when  $\nuv$ has very small positive components $\s_j^2$. 

It is worth to mention that originally MME were derived under the normality assumptions (see the 
original work \cite{henderson_estimation_1959} or \cite{witkovsky_estimation_2012}), but from the 
viewpoint of least squares both versions \eqref{eq:MME}, \eqref{eq:MME2} of MME describe the BLUP 
$\blupYn$ and BLUE $\bluebn$ (\cite[sec.~12.3]{christensen_plane_2011}) with no need to restrict 
distributions of $\Yv$ and $\wv$ to be normal.\\[-6pt]
\end{remark} 

\subsection{Statistical properties at known variance parameters}
\label{sec:properties}

The derivation of theoretical properties of BLUP-NE estimates under the assumption of known $\nuv$, 
regarding the first and second order moment characteristics, can be significantly facilitated by 
the following lemma describing the properties of matrix $\dT$ determining the BLUP $\blupYn$ in 
\eqref{eq:blupY}. 

Its proof can be accomplished in a similar way as the proof of Lemma 3.1 in 
\cite{hancova_natural_2008} by a direct routine computation employing formulas 
\eqref{eq:HV GV W U}, $\dT=\dW^{-1}\V'\Mf$, $\dW=\V'\Mf\V+\s_0^2\D^{-1}$, properties of orthogonal 
projectors (orthogonality, idempotency, symmetry) and Schur complements (symmetry, positive 
definiteness).

\begin{lemma}[basic properties of $\dT$]\label{thm:propertiesT}
	\begin{enumeratep}
	\item $\dT\dT'=\dW^{-1}(\Il-\s_0^2\D^{-1}\dW^{-1})=\D\dU^{-1}(\Il-\s_0^2\dU^{-1})$ 
	\item $\dT\F=\Om{l}{n}$ and $\dT\V=\Il-\s_0^2\D^{-1}\dW^{-1} = \Il-\s_0^2\dU^{-1}$
	\item $\dT\Sgn\dT'=\D-\s_0^2 \dW^{-1} = \D (\Il - \s_0^2 \dU^{-1})$
	\end{enumeratep}
\end{lemma}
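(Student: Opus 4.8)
The plan is to verify each of the three identities by direct computation, exploiting the definition $\dT=\dW^{-1}\V'\Mf$ together with the identities collected in \eqref{eq:HV GV W U} and the defining relation $\dW=\V'\Mf\V+\s_0^2\D^{-1}$. Throughout I will use freely that $\Mf$ is a symmetric idempotent matrix satisfying $\Mf\F=\Om{n}{k}$, and that $\dW$ (hence $\dW^{-1}$) is symmetric. The key algebraic trick, used repeatedly, is to write $\V'\Mf\V = \dW - \s_0^2\D^{-1}$, so that any occurrence of the ``middle'' factor $\V'\Mf\V$ can be traded for $\dW$ minus a correction term; this is exactly the manoeuvre that makes the telescoping in (1) and (3) work.

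For part (1): expand $\dT\dT' = \dW^{-1}\V'\Mf\Mf'\V\dW^{-1} = \dW^{-1}(\V'\Mf\V)\dW^{-1}$ using idempotency and symmetry of $\Mf$. Now substitute $\V'\Mf\V=\dW-\s_0^2\D^{-1}$ to get $\dW^{-1}(\dW-\s_0^2\D^{-1})\dW^{-1} = \dW^{-1} - \s_0^2\dW^{-1}\D^{-1}\dW^{-1} = \dW^{-1}(\Il-\s_0^2\D^{-1}\dW^{-1})$, which is the first claimed expression. The second expression then follows from the relations $\dW^{-1}=\D\dU^{-1}$ and $\s_0^2\D^{-1}\dW^{-1}=\s_0^2\dU^{-1}$ in \eqref{eq:HV GV W U}; one has to be a little careful that these substitutions are only valid when $\D$ is nonsingular, but the general case follows either by the alternative $\dU$-form of $\dT$ in \eqref{eq:compformessj2} or by a continuity/density argument in the entries of $\D$.

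For part (2): $\dT\F=\dW^{-1}\V'\Mf\F=\Om{l}{n}$ is immediate since $\Mf\F=\Om{n}{k}$. For $\dT\V$, write $\dT\V=\dW^{-1}\V'\Mf\V=\dW^{-1}(\dW-\s_0^2\D^{-1})=\Il-\s_0^2\dW^{-1}\D^{-1}=\Il-\s_0^2\D^{-1}\dW^{-1}$, the last equality holding because $\dW^{-1}$ and $\D^{-1}$ commute (both are functions built from the same simultaneously-diagonalizable structure when $\D$ is diagonal, or more simply because $\D^{-1}\dW^{-1}=(\dW\D)^{-1}=\dU^{-1}$ is symmetric); again the $\dU$-form gives the second expression. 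For part (3): $\dT\Sgn\dT' = \dT(\s_0^2\In+\V\D\V')\dT' = \s_0^2\dT\dT' + (\dT\V)\D(\dT\V)'$. Insert the formula for $\dT\dT'$ from (1) and the formula for $\dT\V$ from (2); expanding $(\Il-\s_0^2\D^{-1}\dW^{-1})\D(\Il-\s_0^2\dW^{-1}\D^{-1}) = \D - 2\s_0^2\dW^{-1} + \s_0^4\dW^{-1}\D^{-1}\dW^{-1}$ and combining with $\s_0^2\dW^{-1}-\s_0^4\dW^{-1}\D^{-1}\dW^{-1}$, the quartic terms cancel and one is left with $\D-\s_0^2\dW^{-1}$; the factored form $\D(\Il-\s_0^2\dU^{-1})$ follows from $\dW^{-1}=\D\dU^{-1}$.

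I do not expect any serious obstacle here — the result is a routine direct computation, as the paper itself signals by deferring it. The only points requiring a little care are (i) keeping track of when a factor may legitimately be pulled through $\D^{-1}$ versus $\dU^{-1}$, i.e. making sure the stated second (``$\dU$'') forms are justified even when $\D$ is singular, for which one invokes \eqref{eq:compformessj2}; and (ii) the bookkeeping of the cancellation of the $\s_0^4$ terms in part (3), which is the one spot where a sign slip would be easy to make. Everything else is symmetry, idempotency of $\Mf$, and the substitution $\V'\Mf\V=\dW-\s_0^2\D^{-1}$.
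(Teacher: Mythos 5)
Your overall route is exactly the one the paper intends (direct computation from $\dT=\dW^{-1}\V'\Mf$, the substitution $\V'\Mf\V=\dW-\s_0^2\D^{-1}$, idempotency and symmetry of $\Mf$, and the relations \eqref{eq:HV GV W U}), and part (1) and the claim $\dT\F=\Om{l}{n}$ are fine. The genuine gap is in your treatment of $\dT\V$. The computation gives $\dT\V=\dW^{-1}\V'\Mf\V=\Il-\s_0^2\dW^{-1}\D^{-1}$, and passing from this to the displayed form $\Il-\s_0^2\D^{-1}\dW^{-1}$ requires $\D^{-1}$ and $\dW^{-1}$ to commute, equivalently $\D$ to commute with $\W=\V'\Mf\V$. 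Neither of your two justifications establishes this: diagonality of $\D$ does not make $\D$ and $\dW$ simultaneously diagonalizable, and ``$\dU^{-1}$ is symmetric'' is precisely the assertion you need to prove, since $\dU=\W\D+\s_0^2\Il$ has $\dU'=\D\W+\s_0^2\Il\neq\dU$ whenever $\W$ has off-diagonal entries and $\D$ has distinct diagonal entries (already possible for $l=2$). What the computation actually yields in a general FDSLRM is $\dT\V=\Il-\s_0^2(\dU')^{-1}=\bigl(\Il-\s_0^2\dU^{-1}\bigr)'$, i.e.\ the transpose of the displayed expression; the displayed order of factors is recovered exactly when $\D$ commutes with $\V'\Mf\V$, in particular in the orthogonal FDSLRM \eqref{eq:orthogonality} where $\V'\Mf\V=\GV$ is diagonal, which is the setting in which the lemma is subsequently used. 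So either prove (2) with the factors in the computed order (and note the transpose relation), or state explicitly that you invoke this commutation.

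The same unproved commutation is hidden in your bookkeeping for (3): with your ordering, $(\Il-\s_0^2\D^{-1}\dW^{-1})\D(\Il-\s_0^2\dW^{-1}\D^{-1})$ does not reduce to $\D-2\s_0^2\dW^{-1}+\s_0^4\dW^{-1}\D^{-1}\dW^{-1}$ without commuting factors. Fortunately (3) does not need it: writing $\dT\V=\Il-\s_0^2\dW^{-1}\D^{-1}$ and its transpose, one gets $(\dT\V)\D(\dT\V)'=(\D-\s_0^2\dW^{-1})(\Il-\s_0^2\D^{-1}\dW^{-1})=\D-2\s_0^2\dW^{-1}+\s_0^4\dW^{-1}\D^{-1}\dW^{-1}$, and adding $\s_0^2\dT\dT'=\s_0^2\dW^{-1}-\s_0^4\dW^{-1}\D^{-1}\dW^{-1}$ gives $\D-\s_0^2\dW^{-1}=\D(\Il-\s_0^2\dU^{-1})$, the last step using only $\D\dU^{-1}=\dW^{-1}$ from \eqref{eq:HV GV W U}; note also that this quantity, as well as $\dT\dT'$, is symmetric, so the diagonal entries used in Theorem \ref{thm:propertiesBLUPNE} are unaffected by the ordering issue. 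Your handling of singular $\D$ via the $\dU$-form \eqref{eq:compformessj2} (or a continuity argument) is the right remedy and matches the paper's remark, but it should be carried out with the corrected ordering above.
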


The computational forms \eqref{eq:compformessj} of BLUP-NE are also quadratic forms of $\Xv$. In 
addition, result (2) $\dT\F = \Om{l}{n} $ implies that $\dt_j\F=\Ov{}$. Such a condition leads to 
the conclusion that BLUP natural estimators $\besX{j}$ are translation invariant or shortly 
invariant 
quadratic estimators \cite[sec.~1.5]{stulajter_predictions_2002} or 
\cite[sec.~3.1]{hancova_natural_2008}. The following theorem summarizes theoretical properties of 
BLUP-NE.

\begin{theorem}[statistical properties of $\besX{j}$]\label{thm:propertiesBLUPNE}
\leavevmode \\
Natural estimators $\besX{j};j=1,2, \ldots, l$ of $\nuv$ based on the BLUP $\blupYn$ of $\Yv$ are 
invariant quadratic estimators having the following properties
\begin{enumeratep}
\item $\En{\besX{j}}=\s_j^2-\s_0^2(\dW^{-1})_{jj};$  $j=1,2,\ldots,l,$ 

\vertspace	
\noindent If $\Xv \sim  \NnX$, then
\vertspace	
\item $\Din{\besX{j}}=2(\s_j^2-\s_0^2(\dW^{-1})_{jj})^2;$  $j=1,2,\ldots,l,$
\item $\Covn{\besX{i},\besX{j}}=2(\s_0^2 (\dW^{-1})_{ij})^2;$  $i,j=1,2,\ldots,l,$ $i\neq j$,
\item $\MSEn{\besX{j}}=2{(\s_j^2-\s_0^2(\dW^{-1})_{jj})}^2+\s_0^4(\dW^{-1})_{jj}^2;$ 
$j=1,2,\ldots,l.$
\end{enumeratep}
\end{theorem}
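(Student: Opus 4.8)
The plan is to treat each $\besX{j}=(\blupY)_j^2=\Xv'\dt_j\dt_j'\Xv$ as a quadratic form in $\Xv$ and to feed it into the standard first- and second-moment formulas for quadratic forms, using Lemma~\ref{thm:propertiesT} to evaluate every resulting trace in closed form. The invariance assertion is immediate: part~(2) of Lemma~\ref{thm:propertiesT} gives $\dT\F=\Om{l}{n}$, hence $\dt_j'\F=\Ov{}$, so $\besX{j}$ depends on $\Xv$ only through $\Mf\Xv$ and is thus a translation-invariant quadratic estimator; as a by-product the mean $\F\bv$ will drop out of every moment computation below.

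For~(1) I would use $\En{\Xv'A\Xv}=\tr(A\Sgn)+(\F\bv)'A(\F\bv)$ with $A=\dt_j\dt_j'$. The inhomogeneous term is $(\dt_j'\F\bv)^2=0$ by invariance, and $\tr(\dt_j\dt_j'\Sgn)=\dt_j'\Sgn\dt_j=(\dT\Sgn\dT')_{jj}$. By part~(3) of Lemma~\ref{thm:propertiesT}, $\dT\Sgn\dT'=\D-\s_0^2\dW^{-1}$, whose $jj$-entry equals $\s_j^2-\s_0^2(\dW^{-1})_{jj}$ since $\D=\diag{\s_j^2}$; this is exactly~(1). For~(2)--(3), under $\Xv\sim\NnX$ I would invoke the normal-theory identities $\Din{\Xv'A\Xv}=2\tr(A\Sgn A\Sgn)+4(\F\bv)'A\Sgn A(\F\bv)$ and $\Covn{\Xv'A\Xv,\Xv'B\Xv}=2\tr(A\Sgn B\Sgn)+4(\F\bv)'A\Sgn B(\F\bv)$ (the same quadratic-form moment results used for Lemma~3.1 in \cite{hancova_natural_2008}). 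Taking $A=\dt_i\dt_i'$, $B=\dt_j\dt_j'$, the inhomogeneous parts again vanish because $\dt_j'\F=\Ov{}$, and $\tr(\dt_i\dt_i'\Sgn\dt_j\dt_j'\Sgn)=(\dt_i'\Sgn\dt_j)^2=((\dT\Sgn\dT')_{ij})^2$. Part~(3) of the lemma gives $(\dT\Sgn\dT')_{jj}=\s_j^2-\s_0^2(\dW^{-1})_{jj}$ on the diagonal and $(\dT\Sgn\dT')_{ij}=-\s_0^2(\dW^{-1})_{ij}$ off it (the diagonal $\D$ contributing nothing when $i\neq j$), and squaring yields~(2) and~(3). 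Finally~(4) follows from $\MSEn{\besX{j}}=\Din{\besX{j}}+(\En{\besX{j}}-\s_j^2)^2$, the squared bias being $(\s_0^2(\dW^{-1})_{jj})^2$.

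Because every trace collapses to a single entry of $\dT\Sgn\dT'$, which Lemma~\ref{thm:propertiesT} already identifies explicitly, the argument is essentially bookkeeping rather than genuine computation. The one point demanding care is keeping the diagonal and off-diagonal cases of $\D-\s_0^2\dW^{-1}$ apart, so that the $\D$-term shows up in the bias and variance but cancels in the cross-covariance. The step I would double-check most carefully is the applicability of the Gaussian fourth-moment formula — in particular whether it remains valid in the singular-$\D$ case — but under the standing identifiability assumptions $\Sgn=\s_0^2\In+\V\D\V'\succ 0$ with $\s_0^2>0$, so rank-deficiency of $\Sgn$ never arises and this is not a real obstacle.
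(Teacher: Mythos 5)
Your proposal is correct and follows essentially the same route as the paper's own proof: express $\besX{j}=\Xv'\dt_j\dt_j'\Xv$ as an invariant quadratic form, apply the standard mean/covariance formulas for quadratic forms under normality, and reduce every trace to an entry of $\dT\Sgn\dT'$ via Lemma~\ref{thm:propertiesT}, with the MSE obtained as bias squared plus variance. Your explicit tracking of the inhomogeneous $(\F\bv)$-terms (which vanish since $\dt_j'\F=\Ov{}$) is just a slightly more detailed bookkeeping of what the paper absorbs into the invariant-quadratic-estimator formulas.
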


\begin{proof}
	See the Appendix.
\end{proof}

\begin{remark}
\leavevmode\\
\small
If some components in $\nuv$ are zero, there is a need to use expression 
$\D\dU^{-1}$ instead of $\dW^{-1}$. The first property, biasedness of 
$\besX{j}, j = 1,2,\ldots,l$, is in full accordance with the Ghosh theorem  
(\cite{ghosh_nonexistence_1996}, 
\cite{gajdos_kriging_2017}) about the incompatibility between simultaneous non-negativity and 
unbiasedness of estimators for variances of random components $\Yv$ in LMM. As for bias, defined as 
$\biasn{\besX{j}} = \En{\besX{j}}-\s_j^2$, we have the following expression given by the extended 
Schur complement $\dW$

\begin{equation}
\biasn{\besX{j}} = -\s_0^2(\dW^{-1})_{jj}
\end{equation}
\end{remark}

Now we can theoretically compare quality of the BLUP natural estimators with original natural 
estimators (NE) proposed in our previous paper \cite{hancova_natural_2008}. As we can see from 
summarizing tab. \ref{tab:comparison}, their quality is determined by Schur complements $\W$ and 
$\dW$ where both of them are symmetric and positive definite matrices.

\begin{table}[H]
\renewcommand{\arraystretch}{1.25}
\begin{center}
\caption{First and second order moment characteristics for BLUP-NE and NE.}
\label{tab:comparison}
\begin{tabular}{l r r}
\toprule[1.5pt]
\small \textbf{characteristic }&\textbf{NE estimators $\nesX{j}$} &\textbf{BLUP-NE estimators 
$\besX{j}$}\\
\midrule
\normalsize
$\Updelta_{\nuv}$ & $\s_0^2 (\W^{-1})_{jj}$              
			      & $-\s_0^2 (\dW^{-1})_{jj}$ \\
$D_{\nuv}$        & $2(\s_j^2+\s_0^2(\W^{-1})_{jj})^2\phantom{+\s_0^4(\W^{-1})_{jj}^2x}$ 
                  & $2(\s_j^2-\s_0^2(\dW^{-1})_{jj})^2\phantom{+\s_0^4(\dW^{-1})_{jj}^2x}$\\ 
$MSE_{\nuv}$      & $2(\s_j^2+\s_0^2(\W^{-1})_{jj})^2+\s_0^4(\W^{-1})_{jj}^2$
                  & $2{(\s_j^2-\s_0^2(\dW^{-1})_{jj})}^2+\s_0^4(\dW^{-1})_{jj}^2$\\
\bottomrule[1.5pt]
    \end{tabular}
\end{center}
\end{table}

Using elementary properties of the L\"owner partial ordering 
\cite[chap.~24]{puntanen_formulas_2013}, we have immediately relations $\dW > \W,  
\W^{-1}>\dW^{-1}, (\W)_{jj} > (\dW^{-1})_{jj}$ which imply a smaller absolute value of bias, 
dispersion and MSE for BLUP-NE $\besX{j}$ in comparison with original NE $\nesX{j}$.  This 
conclusion directly 
supports our heuristic idea leading to the BLUP-NE definition \eqref{def:BLUPNE}. 

\subsection*{\textit{Orthogonal FDSLRM}} \label{sec:orthogonal}

In real time series analysis using FDSLRMs, the fundamental modeling 
procedure is based on spectral analysis of time series (\cite{priestley_spectral_2004}, 
\cite{brockwell_time_2009}). We identify the significant Fourier frequencies by periodogram 
(\cite{stulajter_estimation_2004}, \cite{gajdos_kriging_2017}) which restrict the form of design 
matrices $\F$ an $\V$ leading to the orthogonality condition for FDSLRM 
\cite{stulajter_estimation_2004} 
\begin{equation}\label{eq:orthogonality}
\F'\V = 0  \text{ and } \GV = \V'\V = \diag{\norm{\vv_j}^2}.
\end{equation}
where $\vv_j, j = 1,2\ldots,l$ is \textit{j}-th column of $\V$. Such a FDSLRM, satisfying condition 
\eqref{eq:orthogonality}, is called \textit{orthogonal} \cite{stulajter_estimation_2004}. So in 
practice, we mainly work with the orthogonal version of FDSLRM. 

Under the condition of orthogonality 
\eqref{eq:orthogonality}, we can write for matrices $\Mf$, $\dU$, 
$\dW$, $\dT$ and BLUP-NE $\besX{j}$
\begin{gather*}
\Mf \V=\V, \; \dU = \GV \D +\s_0^2 \Il = \diag{\s_0^2+\s_j^2\norm{\vv_j}^2}, \\
\dW^{-1} =\D \dU^{-1} = \diag{\dk_j/\norm{\vv_j}^2}, \; \dt_j = \dk_j\vv_j'/\norm{\vv_j}^2, \\
\besX{j} = \dk_j^2 \, \Xv'\vv_j\vv_j'\Xv/\norm{\vv_j}^4, \; j = 1,2, \ldots, l,
\end{gather*}
where we introduced $ \dk_j 
\equiv\frac{\s_j^2\norm{\vv_j}^2}{\s_0^2+\s_j^2\norm{\vv_j}^2} \in \Rv{}, \; 0\leqq \dk_j <1, j = 
1, 2, \ldots, l $. \\

Applying these results, we get the following direct corollary of Theorem 
\ref{thm:propertiesBLUPNE} for any orthogonal FDSLRM.

\begin{corollary}[properties of $\besX{j}$ in orthogonal FDSLRM]\label{cor:essj}
\leavevmode\\
In an orthogonal FDSLRM natural estimators $\besnX{j}$ based on the BLUP $\blupYn$ 
have the following properties
\begin{enumeratep}
		\item $\En{\besX{j}}= \dk_j \s_j^2 $; $j=1,2,\ldots,l,$ where 
		$\dk_j=\dfrac{\s_j^2\norm{\vv_j}^2}{\s_0^2+\s_j^2\norm{\vv_j}^2} \in \Rv{}_+$. 
		
		\noindent If $\Xv \sim  \NnX$, then 
		\vertspace		
		\item $\Din{\besX{j}}= 2 \dk_j^2 \s^4_j$; $j=1,2,\ldots,l,$
		\item $\Covn{\besX{i},\besX{j}}=0$;  $i,j=1,2,\ldots,l,$ $i \neq j$,
		\item $\MSEn{\besX{j}}=[2\dk_j^2+(1-\dk_j)^2]\s^4_j$;	$j=1,2,\ldots,l.$
\end{enumeratep}
\end{corollary}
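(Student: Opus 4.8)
The plan is to obtain Corollary~\ref{cor:essj} as a direct specialization of Theorem~\ref{thm:propertiesBLUPNE}: once the extended Schur complement has been diagonalized by the orthogonality condition \eqref{eq:orthogonality}, every entry of $\dW^{-1}$ appearing in Theorem~\ref{thm:propertiesBLUPNE} becomes an explicit rational function of $\s_0^2$, $\s_j^2$ and $\norm{\vv_j}^2$, and the four claims then follow by elementary algebra with the quantity $\dk_j$. Concretely, I would first record the simplifications already displayed before the corollary: since $\F'\V = 0$ in \eqref{eq:orthogonality}, the columns of $\V$ lie in the orthogonal complement of the column space of $\F$, so $\Mf\V = \V$, whence $\dW = \V'\Mf\V + \s_0^2\D^{-1} = \diag{\norm{\vv_j}^2 + \s_0^2/\s_j^2}$ is diagonal; equivalently, using the always-defined form \eqref{eq:compformessj2}, $\dU = \GV\D + \s_0^2\Il = \diag{\s_0^2 + \s_j^2\norm{\vv_j}^2}$, so that $\D\dU^{-1} = \diag{\s_j^2/(\s_0^2 + \s_j^2\norm{\vv_j}^2)} = \diag{\dk_j/\norm{\vv_j}^2}$. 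In particular $(\dW^{-1})_{jj} = \dk_j/\norm{\vv_j}^2$ and $(\dW^{-1})_{ij} = 0$ for $i \neq j$; I would keep the $\D\dU^{-1}$ form so that the degenerate case $\s_j^2 = 0$ (where $\dk_j = 0$) is covered with no extra work.

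Next I would substitute these entries into the four formulas of Theorem~\ref{thm:propertiesBLUPNE}. For (1), $\En{\besX{j}} = \s_j^2 - \s_0^2(\dW^{-1})_{jj} = \s_j^2 - \s_0^2\dk_j/\norm{\vv_j}^2$; since $\s_0^2\dk_j/\norm{\vv_j}^2 = \s_0^2\s_j^2/(\s_0^2 + \s_j^2\norm{\vv_j}^2)$, writing the difference over the common denominator $\s_0^2 + \s_j^2\norm{\vv_j}^2$ leaves $\s_j^2\cdot\s_j^2\norm{\vv_j}^2/(\s_0^2 + \s_j^2\norm{\vv_j}^2) = \dk_j\s_j^2$. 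Then (2) is immediate, because $\Din{\besX{j}} = 2(\s_j^2 - \s_0^2(\dW^{-1})_{jj})^2 = 2(\dk_j\s_j^2)^2$, and (3) follows from the vanishing of the off-diagonal entries, so $\Covn{\besX{i},\besX{j}} = 2(\s_0^2(\dW^{-1})_{ij})^2 = 0$.

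Finally, for (4) I would start from $\MSEn{\besX{j}} = 2(\s_j^2 - \s_0^2(\dW^{-1})_{jj})^2 + \s_0^4(\dW^{-1})_{jj}^2 = 2\dk_j^2\s_j^4 + \s_0^4\dk_j^2/\norm{\vv_j}^4$ and check the identity $\s_0^4\dk_j^2/\norm{\vv_j}^4 = (1-\dk_j)^2\s_j^4$; this is just $\s_0^2\dk_j/\norm{\vv_j}^2 = \s_0^2\s_j^2/(\s_0^2 + \s_j^2\norm{\vv_j}^2) = (1-\dk_j)\s_j^2$ squared, using $1-\dk_j = \s_0^2/(\s_0^2 + \s_j^2\norm{\vv_j}^2)$. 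Adding the two pieces gives $[2\dk_j^2 + (1-\dk_j)^2]\s_j^4$. The invariant-quadratic character of $\besnX{j}$ and the normality hypothesis needed for (2)--(4) are inherited verbatim from Theorem~\ref{thm:propertiesBLUPNE}, so nothing extra is needed there. I do not expect any genuine obstacle: the only point calling for a little care is verifying that \eqref{eq:orthogonality} actually collapses $\dW$ (equivalently $\dU$) to a diagonal matrix and bookkeeping the degenerate components $\s_j^2 = 0$, which is exactly why I would route the computation through $\D\dU^{-1}$ rather than $\dW^{-1}$.
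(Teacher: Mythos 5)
Your proposal is correct and follows exactly the paper's own route: specialize Theorem~\ref{thm:propertiesBLUPNE} after noting that \eqref{eq:orthogonality} gives $\Mf\V=\V$ and collapses $\dW^{-1}=\D\dU^{-1}$ to $\diag{\dk_j/\norm{\vv_j}^2}$, then simplify with $1-\dk_j=\s_0^2/(\s_0^2+\s_j^2\norm{\vv_j}^2)$. Your extra care with the degenerate case $\s_j^2=0$ via $\D\dU^{-1}$ matches the paper's own remark on singular $\D$, so nothing is missing.
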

  
Finally, if we look at original natural estimators $\nesX{j}$ from our paper 
\cite{hancova_natural_2008}, then in orthogonal FDSLRM it can be easily shown that $\besX{j} = 
\dk_j^2 \nesX{j}, j = 1, 2 , \ldots, l$. If we introduce $\rho_0 = 1 $, then we obtain the complete 
orthogonal version of \eqref{eq:compformessj} for computing BLUP-NE $\besX{j}$:
\begin{equation} \label{eq:compformessjort}
\besX{j} = \dk_j^2 \nesX{j}, j = 0, 1 , \ldots, l.
\end{equation}

\Section{Mathematical and computational tools of convex optimization}

\subsection{Empirical BLUPs in the BLUP-NE estimation method}

The computational form of our BLUP natural estimators, their first and second order moment 
characteristics were derived at known variances $\nuv$. But if we really knew variances $\nuv$, 
we would not need to estimate them. The practical estimation procedure for $\nuv$ cannot 
depend on the parameters $\nuv$ which are to be estimated. 

However, such situation is not rare at all in LMMs, also in the FDSLRM theory, when we consider 
e.g. double weighted least squares estimators (DOWELSE) or maximum likelihood estimators (MLE) for 
$\nuv$. In FDSRLM fitting these estimators are computed by iterative numerical procedures using
projections in Hilbert spaces (\cite[sec. 3.4]{stulajter_predictions_2002} or 
\cite{stulajter_estimation_2004}). Generally, in the LMM framework, similar iterative numerical 
estimation procedures are described in \cite[chap.~8]{searle_variance_2009}, 
\cite{witkovsky_estimation_2012}.

In our case, the simplest reasonable solution of the mentioned situation is to use an empirical 
version of BLUP (EBLUP) \label{abv-EBLUP} based on the computational form of BLUP-NE 
\eqref{eq:compformessj}, or 
\eqref{eq:compformessjort} in an orthogonal case, as it is usual in the general theory of empirical 
BLUPs 
in LMMs (\cite[chap.~5]{stulajter_predictions_2002}, 
\cite{witkovsky_estimation_2012}, 
\cite[chap.~5]{rao_small_2015}). In particular, this step means 
replacing unknown true parameters $\nuv$ with other "initial" values or estimates of 
$\nuv$ in FDSLRM. In the light of iterative approaches, our EBLUP-NE \label{abv-EBLUPNE} can be 
viewed as a 
two-stage iterative method with one step in the iteration. 

If we think again heuristically as during the formulation of the BLUP-NE definition 
\eqref{def:BLUPNE}, it seems reasonable to expect as good starting values of $\nuv$  any, 
in some sense optimal estimates of $\nuv$, e.g. estimates obtained by maximum likelihood or least 
squares in FDSLRM. 

On the top of that, from the theoretical perspective we should prefer estimates with the simplest 
explicit form and under 
less restrictive assumptions on the structure or distribution of FDSLRM. On the other hand, from 
the practical point of 
view, it will be sufficient to have at least initial estimates which can be obtained by reliable 
and time efficient 
computational methods.

Therefore, in the following sections we investigate, theoretically and practically with real data 
sets, five estimation methods providing initial estimates for $\nuv$ under different assumptions on 
the structure and distribution of FDSLRM observation $\Xv$. Our candidates for the second stage of 
EBLUP-NE are summarized in tab. \ref{tab:methods}.

\vspace{-15pt}

\begin{table}[H]
	\renewcommand{\arraystretch}{1.25}
	\begin{center}
	\caption{\small Estimation methods chosen for the second stage in EBLUP-NE.}
	\label{tab:methods}
	\begin{tabular}{c c}
		\toprule[1.5pt]
		\textbf{least squares method in FDSLRM} & \textbf{maximum likelihood 
		method in FDSLRM}\\
		\midrule
		\small
		double ordinary least squares estimators & maximum likelihood estimators  \\
		(DOOLSE) ref. \cite{stulajter_estimation_2004}, \cite{stulajter_predictions_2002} 
		& (MLE) ref. \cite{stulajter_estimation_2004}, \cite{stulajter_predictions_2002}  \\ 
		\midrule
		modified (unbiased) DOOLSE	& restricted (residual) MLE \\
		(MDOOLSE) ref. \cite{stulajter_estimation_2004}
		& (REMLE) ref. \cite{stulajter_estimation_2004} \\
		\midrule
		natural estimators (NE) ref. \cite{hancova_natural_2008} & \\
		\bottomrule[1.5pt]
	\end{tabular}
	\end{center}
\end{table}

\vspace{-15pt}

Since theoretical and computational aspects of these 
estimation procedures in FDSLRM fitting based on the projection theory in Hilbert spaces are more 
than 10 years old (\cite{stulajter_predictions_2002}, \cite{stulajter_estimation_2004}, 
\cite{hancova_natural_2008}), we revisit and update them in the light of the current theoretical 
and computational tools of convex optimization. 

According to \cite{agrawal_rewriting_2018}, \cite{boyd_convex_2009}, to formulate any mathematical 
problem as an optimization problem, we need to identify three attributes of the problem: 
optimization variable, constraints that the variable must satisfy and the objective function 
depending on the variable whose optimal value we want to achieve. 

In all our estimation methods, optimization variable is $\nuv \in \Rv{l+1}$ satisfying  
constraints $\Ys = \Rv{}_{++} \times \Rv{l}_{+} $. In order to avoid any incompatibility problems 
in the convex optimization theoretical or computational framework, we extend constraints  $\Ys$ 
into the form of standard nonnegativity constraints $\Ys^{*} = [0, \infty]^{l+1}$ or  $\nuv\succeq 
0$ using generalized inequality.

Finally, we also consider using present theoretical and computational tools used in linear mixed 
modeling, since in the LMM framework, the problem of estimating variances has a long and rich 
history with many essential reference works (e.g. \cite{rao_estimation_1988}, 
\cite{searle_variance_2009}, \cite{christensen_plane_2011}, \cite{demidenko_mixed_2013}, 
\cite{rao_small_2015}, see also a review paper \cite{witkovsky_estimation_2012}). 

\subsection{General case of the orthogonal FDSLRM}
First of all, we formulate the basic assumptions required in our investigations on the structure 
and 
distribution of FDSLRM. As we mentioned  in subsection \ref{sec:orthogonal}, in real (practical) 
FDSLRM analysis we mainly work with orthogonal FDSLRMs \eqref{eq:orthogonality}. Therefore, in the 
rest of the paper we will focus primarily on this type of FDSLRMs. 

The second assumption deals with a distribution of the FDSLRM observation $\Xv = 
(X(1),\ldots,X(n))',$ $n \in \N$. For now, we assume $\Xv$  satisfying \eqref{eq:FDSLRMobservation} 
and having any absolutely continuous probability distribution with respect to some 
$\sigma$-finite measure.

Under these two assumptions, we can apply in the second stage of EBLUP-NE three estimation methods: 
\textit{natural estimators} (NE), \textit{double ordinary least squares estimators} (DOOLSE) and 
\textit{modified double ordinary least squares estimators} (MDOOLSE).

\vspace{-6pt}

\paragraph{Original natural estimators -- NE} 
\leavevmode\\[6pt]
In the case of NE, we know the analytic solution of the estimation problem 
\cite{hancova_natural_2008}, which gives us always required non-negative estimates of FDSLRM 
variances. 
Employing results of 
\cite{hancova_natural_2008}, orthogonality condition \eqref{eq:orthogonality} leads to the 
following form of NE
\begin{equation}\label{eq:orthoNE}
\renewcommand{\arraystretch}{1.4}
\breve\nuv(\epsv) =
\begin{pmatrix}
\tfrac{1}{n-k-l}\,\epsv'\,\Mv\,\epsv \\
(\epsv'\vv_1)^2/\norm{\vv_1}^4 \\
(\epsv'\vv_2)^2/\norm{\vv_2}^4 \\
\vdots \\
(\epsv'\vv_l)^2/\norm{\vv_l}^4
\end{pmatrix} = 
\begin{pmatrix}
\tfrac{1}{n-k-l}\left(\epsv'\epsv-\Sum{j}{l}(\epsv'\vv_j)^2/\norm{\vv_j}^2\right) \\
(\epsv'\vv_1)^2/\norm{\vv_1}^4 \\
(\epsv'\vv_2)^2/\norm{\vv_2}^4 \\
\vdots \\
(\epsv'\vv_l)^2/\norm{\vv_l}^4
\end{pmatrix},
\end{equation}
where $\epsv=\xv-\F\estbv =\Mf\xv$ is nothing else than the vector of ordinary least squares (OLS) 
\label{abv-OLS} residuals in FDSLRM, $\Mv = \In - \V(\V'\V)^{-1}\V'$ is the orthogonal projector 
onto the orthogonal element of the column space of $\V$ and $\xv$ is an arbitrary realization of 
the FDSLRM observation $\Xv$. Vectors $\vv_j, j = 1, 2, \ldots, l$ are columns of design matrix 
$\V$. Moreover, in orthogonal FDSLRM BLUE  $\bluebn= 
(\F'\InvSgn\F)^{-1}\F'\InvSgn\xv$ from MME \eqref{eq:MME} is identical with ordinary least squares 
estimate $\estbv =(\F'\F)^{-1}\F'\xv $ not depending on $\nuv$ \cite{stulajter_estimation_2004}.

As for computational complexity of NE, using elementary theory of complexity 
\cite{boyd_introduction_2018}, we get the complexity $2kn$ operations for the residual 
vector $\epsv$ and $4ln$ for NE from \eqref{eq:orthoNE}.  Therefore, computing NE 
represents an algorithm with the complexity having order $\bigO(n)$ 
with respect to the realization length $n$. 

\begin{remark}
\small
\leavevmode\\
For the purposes of software cross-checking of results and evaluating numerical precision of convex 
optimization algorithms, it is worth to formulate the calculation of NE as a convex optimization 
problem. Since NE employ the least-squares method, it is not complicated to show that in 
orthogonal FDSLRMs,  NE can be obtained as a unique, always existing, 
non-negative solution of the following convex optimization problem

\begin{flalign}
\qquad &
\begin{array}{ll} 
\textbf{NE} & \\
\emph{minimize}    & \quad 
f_0(\nuv)=||\epsv\epsv' - \Sum{j}{l}\nus_j\V_j||^2+\norm{\Mv\epsv\epsv'\Mv-\nus_0\V_0}^2 \\[6pt]
\emph{subject to}  & \quad \nuv = \left(\nus_0, \ldots, \nus_l\right)'\in [0, \infty)^{l+1} 
\end{array}
&& \label{def:cvxNE} 
\end{flalign}
where matrices in the objective function $f_0(\nuv)$ are defined by expressions
\begin{equation} \label{eq:MGVVj}
\V_0=\Mf\Mv \text{ and } \V_j = \vv_j\vv_j', j=1,\ldots,l.
\end{equation}
In convex optimization, this kind of optimization problems belongs to the  convex quadratic 
optimization problems or the norm approximation problems \cite[sec.~4.4, 
sec.~6.1]{boyd_convex_2009}.
\end{remark}
\paragraph{Double least squares estimators -- DOOLSE, MDOOLSE} 
\leavevmode\\[6pt]
From the optimization viewpoint, assuming general (not necessarily orthogonal) FDSLRM observation 
$\Xv$ \eqref{eq:FDSLRMobservation}, DOOLSE and MDOOLSE (\cite{stulajter_predictions_2002}, 
\cite{stulajter_estimation_2004}) can be 
viewed as the following optimization problems for $\nuv$ at given OLS residuals $\epsv$ 
(\cite{agrawal_rewriting_2018}, \cite{boyd_convex_2009})
\begin{flalign}
\qquad &
\begin{array}{ll} 
\textbf{DOOLSE\phantom{M}} & \\[6pt]
\emph{minimize}    & f_0(\nuv)=\norm{\epsv\epsv'-\Sgn}^2 \\[6pt]
\emph{subject to}  & \nuv = \left(\nus_0, \ldots, \nus_l\right)'\in [0, \infty)^{l+1} 
\end{array}
&& \label{def:DOOLSE} \\[12pt]
\qquad &
\begin{array}{ll} 
\textbf{MDOOLSE} & \\[6pt]
\emph{minimize} & f_0(\nuv)=\norm{\epsv\epsv'-\Mf\Sgn\Mf}^2 \\[6pt]
\emph{subject to} & \nuv = \left(\nus_0, \ldots, \nus_l\right)'\in  [0, \infty)^{l+1}
\end{array}
&& \label{def:MDOOLSE}
\end{flalign}
where $\epsv$ are again OLS residuals as in the case of NE and $\Sgn = \s_0^2\In + \V\D\V'$ is the 
covariance matrix of $\Xv$. We call matrix $\Seps \equiv \epsv\epsv'$  \textit{matrix of residual 
products} or more compactly \textit{the residual products matrix}. 

\begin{remark} \label{rem:(M)DOOLSE}
	\small
	\leavevmode\\
	In the next text, some theoretical results can be written in one form for both of DOOLSE and 
	MDOOLSE problems. To emphasize this fact, we will use one abbreviation with parenthesis $(~)$, 
	e.g (M)DOOLSE, assuming  that given results or considerations hold for both problems.
\end{remark}

Using the geometrical language of 
Hilbert spaces (\cite{brockwell_time_2009}, \cite{stulajter_predictions_2002}), (M)DOOLSE 
determine a covariance matrix with the parametric structure as in $\Sgn$ but having the smallest 
Euclidean distance to the matrix $\Seps$. In such case,  (M)DOOLSE for orthogonal FDSLRMs could 
be computed geometrically as the orthogonal projection of $\Seps$ onto the linear span $\LS{\V_0, 
\V_1,\ldots, \V_l}$ using the Gram matrix $\G$ of $\{\V_0, \V_1, \ldots, 
\V_l\}$ generated by the inner 
product $(\bullet,\bullet) = \tr(\bullet \cdot \bullet)$. Matrices $\V_j, j=1,\ldots, l$ are 
given by \eqref{eq:MGVVj}, for DOOLSE:  $\V_0 = \In$ and for MDOOLSE: $\V_0 = \Mf$.

According to \cite{stulajter_estimation_2004}, \cite{gajdos_kriging_2017}, the mentioned projection 
is given by 
\begin{equation} \label{eq:projectionMDOOLSE}
\est\nuv(\epsv) = \G^{-1}\qv,
\end{equation}
where 
$ \quad \G = \begin{pmatrix}
n^* & \norm{\vv_{1}}^2 & \norm{\vv_{2}}^2 & \ldots & 
\norm{\vv_{l}}^2 \\
\norm{\vv_{1}}^2 & \norm{\vv_{1}}^4 & 0 & \ldots & 0 \\
\norm{\vv_{2}}^2 & 0 & \norm{\vv_{2}}^4 & \ldots & 0 \\
\vdots & \vdots & \vdots & \ldots & \vdots \\
\norm{\vv_{l}}^2 & 0 & 0 & \ldots & \norm{\vv_{l}}^4
\end{pmatrix}\succ0 
$,$\qquad \qv = 
\begin{pmatrix}
\epsv'  \epsv\\
(\epsv'  \vv_{1})^2 \\
(\epsv'  \vv_{2})^2 \\
\vdots \\
(\epsv'  \vv_{l})^2
\end{pmatrix},\\[12pt]$
where for DOOLSE:  $n^* \equiv n$, for MDOOLSE: $n^* \equiv n-k$, $\epsv$ 
are again OLS residuals, $\vv_j$ are columns of $\V$.

However, the projection method can produce estimates out of the parametric space $\Ys$ at a 
relatively high probability \cite{gajdos_kriging_2017}, i.e. in many cases it produces negative 
estimates. The proposed method \eqref{eq:projectionMDOOLSE} is not able to handle constraints given 
by $\Ys$ reliably and in 
some simple way. 

To distinguish the projection estimates from real DOOLSE \eqref{def:DOOLSE}, MDOOLSE 
\eqref{def:MDOOLSE} which are always non-negative, 
we suggest to be more accurate and use full names for real DOOLSE and MDOOLSE: 
\textit{non-negative} 
DOOLSE (NN-DOOLSE) and \textit{non-negative} MDOOLSE (NN-MDOOLSE) 
This specific way is similarly used e.g. in the case of Rao's MINQUE \cite[chap. 
5]{rao_estimation_1988}. For the projection-based DOOLSE and MDOOLSE, without considering
nonnegativity constraints, we leave the original acronyms DOOLSE, MDOOLSE.

Applying the basic convex quadratic optimization theory (\cite{cornuejols_optimization_2018}, 
\cite{bertsekas_convex_2009}) in orthogonal FDSLRM, we can rewrite NN-(M)DOOLSE as strictly convex 
quadratic problems, whose solutions $\est\nuv$ always exist and are unique global minimizers on 
$\Ys^{*}$ (see Propositions \ref{pro:cvxMDOOLSE} and \ref{pro:cvxMDOOLSEexist} in the Appendix)
\begin{flalign}
\qquad&
\begin{array}{ll} 
\textbf{NN-(M)DOOLSE} & \\[6pt]
\emph{minimize}  & \quad f_0(\nuv)= \nuv'\G\nuv - 2\qv'\nuv\\[6pt]
\emph{subject to} & \quad -\I{l+1}\nuv \preceq \Ov{l+1}
\end{array} 
&&
\end{flalign}

The most important result of convex optimization for NN-(M)DOOLSE problems are 
fundamental \textit{Karush-Kuhn-Tucker (KKT) optimality conditions}\label{abv-KKT}, which in 
orthogonal FDSLRM 
provide a \textit{necessary and sufficient} condition for optimality of solutions (see Proposition 
\ref{pro:KKTcvxMDOOLSE} in the Appendix). In practice, finding solutions of convex optimization 
problems is nothing else than solving the corresponding KKT conditions analytically or in general 
numerically. 

In our case of NN-(M)DOOLSE, the KKT conditions can be represented by a set of $2^{l}$ 
nonsingular linear systems of equations, where each of them is given by matrix $\K$
derived from  $\G$ and by vector $\qv$ from \eqref{eq:projectionMDOOLSE}. Our theoretical 
results in the Appendix show that only one of the linear systems always gives us all required 
non-negative elements of $\est\nuv$. 

On this basis, we can establish a simple KKT optimization 
algorithm which run through given linear systems, compute their analytical solution and stops when 
the solution appears non-negative. In other words, the proposed KKT algorithm always founds the 
required NN-(M)DOOLSE  $\nndest{\nuv}$ in at most $2^{l}$ steps. The algorithm is summarized in 
tab. \ref{tab:KKTalgorithm}, whereas the proof and details are in the Appendix. 

As for computational complexity of the KKT optimization algorithm, all 
calculations of input have complexity $\bigO(n)$, whereas the body of the algorithm (steps 1-3) is 
$\bigO(l^2\cdot2^{l})$. Since fixed number $l$ of variance parameters in matrix $\D_{\nuv} = 
\Cov{\Yv}$ is usually much smaller than $n$ (typically $l^2\cdot2^l$ is of $\bigO(n)$), then the 
complete algorithm has the leading order $\bigO(n)$ with respect to $n$.

\begin{table}[H]
\begin{center}
\caption{\small Scheme of the KKT algorithm for NN-(M)DOOLSE $\nndest{\nuv}$ in orthogonal FDSLRM.}
\label{tab:KKTalgorithm}
		\begin{tabular}{p{12cm}}
			\toprule[1.5pt]  \\
			\textbf{Input:} Form the matrix $\G$, the  vector $\qv$ from vectors $\epsv$ and 
			$\vv_j; j=1,\ldots,l$. \\[6pt]
			\noindent \textbf{For} each auxiliary vector $\bvv =(b_1, b_2, \ldots, b_l)' \in 
			\{0,1\}^{l}$ \textbf{do}: 
			\begin{enumerate}
				\item\textit{Set the KKT-conditions matrix} $\K$: 
				\item[]	$\K \leftarrow \G$; 
				\item[]\textbf{For} $j=1,2,\ldots,l$ \textbf{do}: \textbf{If} 
				$b_j=0$ \textbf{then} 
				$K_{0j}\leftarrow 
				0, K_{jj} \leftarrow -1$.
				\item\textit{Calculate the auxiliary vector} $\gv$:
				\item[] $\gv \leftarrow \K^{-1}\qv$. 
				\item \textit{Test non-negativity of}  $\gv$: 
				\item[] \textbf{If} $\gv \geq 0$ \textbf{then} \textbf{quit}.
			\end{enumerate} \\
			\textbf{Output:} Use the last $\bvv, \gv$ to form NN-(M)DOOLSE $\nndest{\nuv}$ of 
			$\nuv$: \\
			\hspace{4ex} $\nndest{\nuv}\leftarrow \gv$; \\
			\hspace{4ex} \textbf{For} $j=1,2,\ldots,l$ \textbf{do}: 
			\textbf{If} $b_j=0$ \textbf{then} 
			$\est\nus_{+j}\leftarrow 0$. \\[6pt]
			\bottomrule[1.5pt]
		\end{tabular}
\end{center}
\end{table}

\noindent It is the same as in the case of NE and generally typical only for computationally 
fastest analytical solutions of the KKT conditions. 

\paragraph{Computational tools for orthogonal FDSLRM} 
\leavevmode\\[6pt]
In nonlinear optimization, there is a variety of highly efficient, fast and reliable 
open-source and commercial software packages \cite[chap.~20]{cornuejols_optimization_2018}. We have 
chosen one of the most well-known open source libraries for solving convex optimization tasks -- 
CVXPY \cite{diamond_cvxpy:_2016} and its R version CVXR \cite{fu_cvxr:_2019}. 

CVXPY \label{abv-CVXPY} is a scientific Python library but also a language with very simple, 
readable syntax not 
requiring any expertise in convex optimization and its PC implementation. CVXPY allows the user to 
specify the mathematical optimization problem naturally following normal mathematical notation as 
we can see in computing NN-DOOLSE (fig.~\ref{fig:doolsecvxpy}) where a code easily mimics the 
non-negative DOOLSE mathematical formulation \eqref{def:DOOLSE} with $\Sgn = 
\s_0^2\In+\V\D\V'=\nus_0\In+\V\diag{\nus_j} \V'$.

CVXPY, implements not only convex optimization  solvers using interior-point numerical methods 
which are extremely reliable and fast, but also verifies convexity of the given problem 
using rules of disciplined convex programming \cite{grant_disciplined_2006}. In FDSLRM, 
interior-points numerical methods have complexity $n^3$ for one iteration or $\log(1/\eps)$ times 
bigger for the complete computation with a  precision $\eps$ of the required optimal solution 
\cite{boyd_convex_2009}.

\begin{remark}
	\small
	\leavevmode\\
	It is worth to mention that both CVXPY and CVXR are able to solve non-negative DOOLSE and 
	MDOOLSE 
	in 
	a general FDSLRM without the orthogonality condition. CVXPY was inspired by MATLAB optimization 
	package CVX \cite{grant_cvx:_2018} still used in many references, e.g. 
	\cite{cornuejols_optimization_2018}. However, in CVXPY (CVXR) the user can easily combine 
	convex 
	optimization and  simplicity of Python (R) language together with its high-level features such 
	as 
	object-oriented design or parallelism. 
\end{remark}

\begin{figure}[H]
	\centering
	\includegraphics[width=0.9\linewidth]{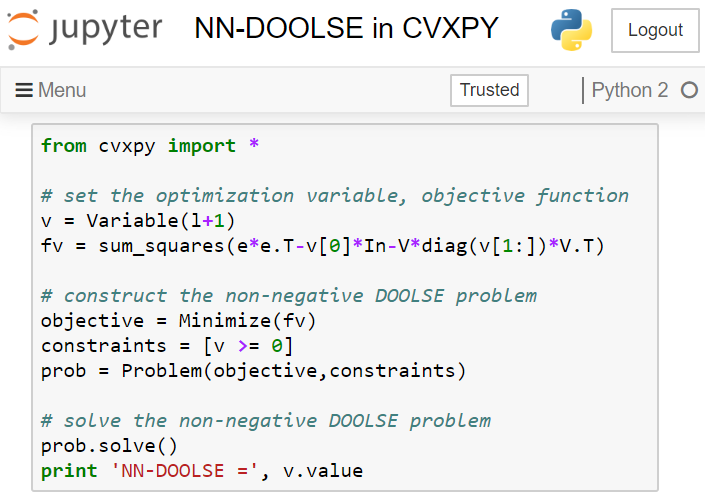}
	\caption{\small CVXPY code for computing DOOLSE in Jupyter environment.}
	\label{fig:doolsecvxpy}
\end{figure}

\vspace{-15pt}

\subsection{Gaussian orthogonal FDSLRM}
In the LMM framework, the most usual $\Xv$ distribution used in practice is represented by the 
multivariate normal (Gaussian) distribution. Therefore, in addition to the orthogonality of FDSLRM, 
we will require in this section the distributional assumption $\Xv \sim \NnX$. We will refer to 
FDSLRM under these assumptions as  to \textit{Gaussian orthogonal FDSLRM}.

In the case of Gaussian orthogonal FDSLRM, we can add to our investigation last two estimation 
methods from tab.~\ref{tab:methods}: \textit{maximum likelihood estimators} (MLE) and 
\textit{residual maximum likelihood estimators} (REMLE). 

\paragraph{Maximum likelihood estimators -- MLE, REMLE} 
\leavevmode\\[6pt]
Both MLE and REMLE of variances $\nuv$ provide estimates maximizing ML and REML 
loglikelihood functions (logarithms of likelihoods).  Using simple and clear arguments from 
Lamotte's paper \cite{lamotte_direct_2007}, we can easily form the ML loglikelihood function 
$\lm(\nuv, \xv)$ and REML loglikelihood function $\lr(\nuv,\xv)$, assuming 
general (not necessarily orthogonal) FDSLRM observation $\Xv$ \eqref{eq:FDSLRMobservation}, as
\begin{gather} \label{eq:loglikelihoods}
\lm(\nuv, \xv) = \tfrac{1}{2} \left(\ln \det (\InvSgn) -\gnorm{\xv-\F\bluebn}{\InvSgn}^2\right) \\
\lr(\nuv, \xv) = \tfrac{1}{2} \left(\ln \det (\InvSgn) - \ln \det 
(\F'\InvSgn\F)-\gnorm{\xv-\F\bluebn}{\InvSgn}^2\right)
\end{gather}
where $\epsv_{\nuv}=\xv-\F\bluebn$ are FDSLRM residuals, known as \textit{marginal 
residuals} in LMM \cite{singer_graphical_2017}, $\xv$ is an arbitrary realization of the FDSLRM 
observation $\Xv$ with covariance matrix $\Sgn$,  $\bluebn$ is BLUE of $\bv$ in MME \eqref{eq:MME} 
and $\gnorm{\bullet}{\InvSgn}^2 = (\bullet)'\InvSgn(\bullet)$ is a generalized 
vector norm given by $\InvSgn$.

\begin{remark}
	\leavevmode\\
	\small
	In Lammote's paper \cite{lamotte_direct_2007}, which presents a direct derivation of REML 
	likelihood function in LMM using familiar linear algebra operations, we can find that the 
	formulation of ML and REML likelihood in LMM under the assumption of normality was originally 
	done 
	in Harville's works	\cite{harville_bayesian_1974}, \cite{harville_maximum_1977}.  However, 
	LMM references up to 2007 show the explicit form of REML likelihood for LMM rarely. In 
	\v{S}tulajter's FDSLRM reference works dealing with MLE \cite{stulajter_predictions_2002}, 		
	\cite{stulajter_estimation_2004} REML likelihood for FDSLRM is also absent. According to 
	Lammote \cite{lamotte_direct_2007}, the reason why seems to be Harville's too sophisticated, 
	difficult and indirect derivation. 
\end{remark}

Maximum likelihood and residual maximum likelihood estimation of variance components in LMMs 
produce, in general, no analytical expressions for the estimators 
\cite[chap.~6]{searle_variance_2009}. They exist only for $\nus_0 >0$. According to \cite[sec. 
2.5, thm 4]{demidenko_mixed_2013} MLE and REMLE estimates do not exist in general FDSLRM with LMM 
observation $\Xv$ \eqref{eq:FDSLRMobservation} if and only if a realization $\xv$ of $\Xv$ belongs 
to $\LS{\F,\V}$, linear span of columns $\F$ and $\V$. This condition is also equivalent with 
$\epsv \in \LS{\V}$.  

In orthogonal FDSLRM, the MLE and REMLE can be rewritten as the optimization problems for $\nuv$ 
(\cite{agrawal_rewriting_2018}, 
\cite{boyd_convex_2009}) at 
given OLS residuals $\epsv$ (for orthogonal FDSLRM $\epsv = \epsv_{\nuv}$, 
\cite{stulajter_estimation_2004}) 
\begin{flalign}
\qquad &
\begin{array}{ll} 
\textbf{MLE}    &\\[6pt]
\emph{maximize}   & \quad f_0(\nuv)=\ln \det(\InvSgn) - \gnorm{\epsv}{\InvSgn}^2\\[6pt]
\emph{subject to} & \quad \nuv = \left(\nus_0, \ldots, \nus_l\right)'\in [0, \infty)^{l+1} 
\end{array} 
&& \label{def:MLE}
\end{flalign}

\vspace{-6pt}

\begin{flalign}
\qquad &
\begin{array}{ll} 
\textbf{REMLE}    & \\[6pt]
\emph{maximize}   & \quad f_0(\nuv)=\ln  \det(\InvSgn) - \ln\det(\F'\InvSgn\F) -  
                    \gnorm{\epsv}{\InvSgn}^2\\[6pt]	
\emph{subject to} & \quad \nuv =  \left(\nus_0, \ldots, \nus_l\right)' \in [0, \infty)^{l+1}
\end{array}
&& \label{def:REMLE}
\end{flalign}
where $\gnorm{\epsv}{\InvSgn}^2 = \epsv'\InvSgn\epsv =  \tr\left(\Seps\InvSgn\right)$ and $\Seps$ 
is the residual products matrix used in  definitions \eqref{def:DOOLSE}, \eqref{def:MDOOLSE} of 
(M)DOOLSE.  

Concerning convex optimization, (RE)MLE problems \eqref{def:REMLE}, \eqref{def:MLE} are generally 
not convex problems. However in orthogonal FDSLRM, according to 
\cite[sec.~4.1.3]{boyd_convex_2009}, using an 
appropriate bijective transformation, we can reformulate \linebreak (RE)MLE as equivalent 
convex optimization problems (see Proposition \ref{pro:cvxREMLE}). 

Employing the theory of convex optimization, we can prove that equivalent (RE)MLE convex problems 
are strictly convex, for $\epsv \not\in \LS{\V}$ have always a unique solution, unique global 
minimizer satisfying corresponding KKT conditions (see Propositions \ref{pro:cvxREMLEexist}, 
\ref{pro:KKTcvxREMLE} and their proofs in the Appendix). These 
conditions are again necessary and sufficient for optimal solutions. 

Combining all our theoretical results, we arrive to the main theoretical result of the section.
It can be shown that for $\nus_0 \neq 0$ (or $\epsv \not\in \LS{\V} $) the KKT conditions for 
NN-(M)DOOLSE are equivalent with the KKT conditions for (RE)MLE and the bijective transformation. 
Or in other words, we can formulate the following theorem.
\begin{theorem}[equivalence between NN-(M)DOOLSE and (RE)MLE] \label{thm:eqDOOLSEMLE}
	\leavevmode\\
	In a Gaussian orthogonal FDSLRM non-negative (M)DOOLSE are almost sure equal to (RE)MLE, i.e.
	$$P\left( \begin{matrix}
	\emph{non-negative} \\
	\emph{DOOLSE} \\
	\end{matrix} \quad = \quad
	\begin{matrix}
	\emph{MLE}  
	\end{matrix} \right)
	=P\left( \begin{matrix}
	\emph{non-negative } \\
	\emph{MDOOLSE} \\
	\end{matrix} \quad = \quad
	\begin{matrix}
	\emph{REMLE} 
	\end{matrix} \right)
	=1.$$
\end{theorem}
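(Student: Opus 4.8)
The plan is to split the claim into a deterministic identity and a probabilistic statement. Let $A = \{\Xv \notin \LS{\F,\V}\}$. I would first show that \emph{on $A$} one has, as functions of the data, NN-DOOLSE $=$ MLE and NN-MDOOLSE $=$ REMLE, and then that $P(A)=1$. The two probability-one equalities in the theorem follow: on the complement of $A$ the (RE)MLE does not even exist, so the events whose probabilities we must compute are contained in $A$ and contain the sub-event of $A$ on which the deterministic identity holds, namely all of $A$; hence they have probability $P(A)=1$.

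For $P(A)=1$ I would argue as follows. By the identifiability assumption $r(\F\;\V)=k+l<n$, the column span $\LS{\F,\V}$ is a proper linear subspace of $\Rv{n}$, hence a Lebesgue-null set; since $\Xv\sim\NnX$ has an absolutely continuous distribution, $P(\Xv\in\LS{\F,\V})=0$, i.e. $P(A)=1$. Moreover, by the remark preceding \eqref{def:MLE}, on $A$ we have $\epsv=\Mf\xv\notin\LS{\V}$, so by \cite[sec.~2.5, thm~4]{demidenko_mixed_2013} the MLE and REMLE exist; together with Propositions \ref{pro:cvxMDOOLSEexist} and \ref{pro:cvxREMLEexist} in the Appendix this means that on $A$ all four estimators are unique global minimizers of strictly convex programs, each characterised by its KKT conditions, which in the orthogonal FDSLRM are necessary and sufficient (Propositions \ref{pro:KKTcvxMDOOLSE} and \ref{pro:KKTcvxREMLE}).

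For the deterministic identity, working on $A$, I would pass to the coordinates in which $\Sgn$ is diagonal: using $\GV=\V'\V=\diag{\norm{\vv_j}^2}$, the matrix $\Sgn$ acts as $\s_0^2$ on $\LS{\V}^{\perp}$ and as $\s_0^2+\s_j^2\norm{\vv_j}^2$ on the direction $\vv_j$; and, crucially, since $\F'\V=0$ implies $\LS{\F}\subseteq\LS{\V}^{\perp}$, one gets $\F'\InvSgn\F=\s_0^{-2}\F'\F$, so the REML correction $-\ln\det(\F'\InvSgn\F)$ equals $k\ln\s_0^2$ up to an additive constant. I would then apply the bijective reparametrisation of Proposition \ref{pro:cvxREMLE} (which replaces $\s_j^2$ by a quantity of the form $\s_0^2+\s_j^2\norm{\vv_j}^2$), under which both (RE)MLE objectives become strictly convex and separable in the new variables, and write out the stationarity part of their KKT conditions. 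Substituting the diagonal expressions above and the vectors $\G,\qv$ from \eqref{eq:projectionMDOOLSE}, these stationarity equations reduce exactly to the linear systems $\K\gv=\qv$ of the KKT algorithm in tab.~\ref{tab:KKTalgorithm} for NN-(M)DOOLSE, with $n^*=n$ in $\G$ for the DOOLSE/MLE pair and $n^*=n-k$ for the MDOOLSE/REMLE pair, the shift being precisely the $k\ln\s_0^2$ correction just noted. The dual-feasibility and complementary-slackness parts match verbatim because in both problems the active constraints are indexed by the same binary vector $\bvv\in\{0,1\}^{l}$ — a coordinate $\s_j^2$ is set to zero exactly when its nonnegativity constraint is active — and the reparametrisation is a bijection between the corresponding faces. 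Since the KKT conditions are necessary and sufficient and each problem has a unique solution, the two transformed minimizers coincide; undoing the bijection yields NN-DOOLSE $=$ MLE and NN-MDOOLSE $=$ REMLE on $A$.

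The main obstacle I anticipate is the computation in the last paragraph: verifying that the KKT stationarity equations of the convex-reformulated (RE)MLE, written in the reparametrised coordinates, are term-for-term identical with the linear systems $\K\gv=\qv$ of NN-(M)DOOLSE. This requires expressing $\ln\det\InvSgn$ and $\gnorm{\epsv}{\InvSgn}^2=\tr(\Seps\InvSgn)$ through the new variables, differentiating, and checking that the active-set structure — the $2^{l}$ branches over $\bvv\in\{0,1\}^{l}$ — transfers under the bijection. Everything else (existence, uniqueness, and the null-set argument) is routine given the Appendix propositions.
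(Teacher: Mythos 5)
Your proposal is correct and takes essentially the same route as the paper: reformulate (RE)MLE via the bijection \eqref{def:bijection} (Proposition \ref{pro:cvxREMLE}), use that the KKT conditions of both strictly convex problems are necessary and sufficient with unique solutions, match the two KKT systems, and invoke $P(\epsv\in\LS{\V})=0$. The ``main obstacle'' you flag is precisely the short computation the paper carries out: since $d_0-d_j\norm{\vv_j}^2=(\nus_0+\nus_j\norm{\vv_j}^2)^{-1}>0$ the multipliers $\mu_j$ vanish by complementary slackness, and rewriting the stationarity conditions in the $\nuv$-coordinates gives exactly the NN-(M)DOOLSE system \eqref{eq:KKT0}--\eqref{eq:slack} with $\lambdas_0=0$, including the $n^*=n$ versus $n^*=n-k$ shift you attribute to the $\ln\det(\F'\InvSgn\F)$ term.
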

\begin{proof}
	See the Appendix.
\end{proof}

\paragraph{Computational tools for Gaussian orthogonal FDSLRM} 
\leavevmode\\[6pt]
The most important and useful consequence of Theorem \ref{thm:eqDOOLSEMLE} is that we can 
compute MLE or REMLE in orthogonal FDSLRM as quadratic NN-(M)DOOLSE. Therefore, we can use our KKT 
algorithm of order $\bigO(n)$ or computational tools CVXPY (CVXR) with $\bigO(n^3)$ described in 
the previous section.  Moreover, for $\epsv \in \LS{\V}$ NN-(M)DOOLSE do not fail as (RE)MLE, but 
naturally extend them.

Since a Gaussian orthogonal FDSLRM is also Gaussian LMM, we can also apply current LMM packages.
In our previous paper \cite{gajdos_kriging_2017}, we stated that no current package in R is 
directly and effectively suitable for FDSLRM. Thanks to a detailed study of  
\cite{demidenko_mixed_2013}, \cite{galecki_linear_2013} we were successfully directed and 
instructed how to implement the FDSLRM variance structure into one of the best R packages for LMM 
known 
as \textit{nlme} \label{abv-nlme} (\cite{pinheiro_mixed-effects_2009}, \cite{pinheiro_nlme:_2018}). 
After that, 
inspired by nlme,  we found another R package called \textit{sommer}\label{abv-sommer}, also 
suitable for FDSLRM 
fitting (\cite{covarrubias-pazaran_genome-assisted_2016}, \cite{cornuejols_optimization_2018}).

Simultaneously, thanks to online computing environment \textit{CoCalc} for SageMath 
\cite{stein_sage_2019} with the possibility to run computations and codes for free in many other 
programming languages and open softwares, we are also able to run and test MATLAB function 
\textit{mixed} (\cite{witkovsky_mixed_2018}, \cite{witkovsky_estimation_2012}) primarily intended 
for estimating variances in LMMs. On the basis of successful tests, we wrote its R version called 
\textit{MMEinR}\label{abv-MMEinR}, which is also available in our GitHub Repository 
\cite{gajdos_mmeinr:_2019}.

\begin{remark}
\small
\leavevmode\\
The mentioned LMM packages can also handle (RE)MLE in the general FDSLRM without orthogonality 
restriction. The packages use iterative methods based on EM algorithm or Henderson MME together 
with some version of the Newton-Raphson method whose complexity is generally at least $n^3$.

We are fully aware that another efficient implementation for estimating variances in LMMs provides 
\textit{lme4} package \cite{bates_fitting_2015} or SAS package \textit{PROC MIXED} 
\cite{stroup_sas_2018}, \cite{sasinstituteinc._sas/stat_2018}.  As for lme4, which is much faster 
than nlme, we found that the package does not allow implementing FDSLRM using lme4 standard input 
procedure. 

Finally, as SAS laymans, we also tried a university edition of SAS, free from 2014, but we left it 
after running into initial problems with Windows installation process in a virtual machine 
environment and subsequently with a more sophisticated programming language than Python or R. 
Therefore our knowledge with SAS remained only at the theoretical level.

\end{remark}

\section{Application in real data examples}

\subsection{Three real data sets and their FDSLRMs}

We illustrate the obtained theoretical results and the performance of the proposed EBLUP-NE method 
of variances $\nuv$  in the FDSLRM \eqref{def:FDSLRM} using three real data sets: 
\textit{electricity consumption}, \textit{tourism} and \textit{cyber security}. For all data sets, 
we identified the most parsimonious structure of the FDSLRM using an iterative process 
of the model building and selection based on exploratory tools of spectral analysis and LMM theory. 
Details of our analysis and modeling can be found in our easy reproducible Jupyter notebooks freely 
available at our GitHub 
repository   \cite{gajdos_fdslrm_2019}.

\paragraph{Electricity consumption}
\leavevmode\\[6pt]
As the first real data example, we have the econometric time series data set, representing 
hourly observations of the consumption of the electric energy (in kWh) in an department store. The 
number of time series observations is $n=24$. The data was adapted from 
\cite{stulajter_estimation_2004}. The consumption data can be fitted 
 by the following Gaussian 
orthogonal FDSLRM \cite{stulajter_estimation_2004}:
\begin{flalign} 
\qquad  X(t) = 
& \beta_1+\beta_2\cos(\omega_1 t)+\beta_3\sin(\omega_1 t)+ 
\nonumber\\
& +Y_1\cos(\omega_2 t)+Y_2\sin(\omega_2 t)+\label{model:electricity} \\
& + Y_3\cos(\omega_3 t)+Y_4\sin(\omega_3 t) 
+w(t), \, t\in \N & \nonumber
\nonumber
\end{flalign}
with $k=3, l=4, \;\Yv = (Y_1, Y_2, Y_3, Y_4)' \sim \NX_4(\Ov{4}, \D_{\nuv}),\; w(t) \sim iid \,
\NX(0,\s_0^2) $ and $\nuv= (\s_0^2, \s_1^2, \s_2^2, \s_3^2, \s_4^2)' \in \Rv{5}_{+}$. Frequencies 
$\omega_1,\omega_2,\omega_3 $ are suitable  Fourier frequencies from the time series periodogram  
in spectral analysis (\cite{priestley_spectral_2004}, \cite{stulajter_estimation_2004}, 
\cite{gajdos_kriging_2017}).

Since the time series data set contains only 24 observations and we have to estimate 
three regression and five variance parameters of the FDSLRM, this real data FDSLRM example should 
be considered only as a toy example. Taking two sets of Fourier frequencies 
$\left(2\pi/24,2\pi\cdot 3/24, 
2\pi \cdot 4/24\right)$ or $\left(2\pi/24,2\pi\cdot 2/24, 
2\pi \cdot 3/24\right)$, we get two toy models previously introduced in \cite{gajdos_kriging_2017} 
and \cite{stulajter_estimation_2004}. These models give us the opportunity to check our 
numerical results and to demonstrate how in principle FDSLRM estimation methods and computational 
tools work.

\paragraph{Tourism}
\leavevmode\\[6pt]
In this econometric FDSLRM application, we consider the time series data set, called 
\textit{visnights}, representing \textit{total quarterly visitor nights} (in millions) from 
1998-2016 in one of the regions of Australia -- inner zone of Victoria state. The number of time 
series observations is $n=76$. The data was adapted from \cite{hyndman_forecasting:_2018}.

The Gaussian orthogonal FDSLRM fitting the tourism data has the following form
(for details see our Jupyter notebook \texttt{tourism.ipynb}):
\begin{flalign}
\qquad  X(t) = 
& \beta_1+\beta_2\cos\left(\tfrac{2\pi t}{76}\right)+\beta_3\sin\left(\tfrac{2\pi t\cdot 
2}{76}\right)+ \label{model:tourism} \\
& +Y_1\cos\left(\tfrac{2\pi t\cdot 19 }{76}\right)+Y_2\sin\left(\tfrac{2\pi t\cdot 
19}{76}\right)+Y_3\cos\left(\tfrac{2\pi t\cdot 38}{76}\right)+w(t), \, t\in \N 
& \nonumber
\end{flalign}
with $k=3, l=3, \;\Yv = (Y_1, Y_2, Y_3)' \sim \mathcal{N}_3(\Ov{3}, \D_{\nuv})\,$  and
$\nuv= (\s_0^2, \s_1^2, \s_2^2, \s_3^2)' \in \Rv{4}_{+}$.

\paragraph{Cyber attacks}
\leavevmode\\[6pt]
Our final FDSLRM application describes the real time series data set representing the total weekly 
number of cyber attacks against a honeynet -- an unconventional tool which mimics real systems 
connected to Internet, like business or school computers intranets, to study methods, tools and 
goals of cyber attackers. Data, 
taken 
from \cite{sokol_prediction_2018}, were collected from November 2014 to May 2016 in CZ.NIC honeynet 
consisting of Kippo honeypots in medium-interaction mode. The number of time series observations is 
$n=72$. 

The suitable FDSLRM, after a preliminary logarithmic transformation of 
data $Z(t) = \log X(t)$, is again Gaussian orthogonal (for details see our 
Jupyter notebook \texttt{cyberattacks.ipynb}) and in comparison with 
previous models \eqref{model:electricity}, \eqref{model:tourism} has the 
simplest structure:
\begin{flalign}
\qquad  Z(t) = 
& \beta_1+\beta_2\cos\left(\tfrac{2\pi t}{72}\right)+\beta_3\sin\left(\tfrac{2\pi\cdot 3 
t}{72}\right)+ \label{model:security}\\
& +Y_1\cos\left(\tfrac{2\pi\cdot 14 t}{72}\right)+Y_2\sin\left(\tfrac{2\pi\cdot 14 
t}{72}\right)+w(t), \, t\in \N,
& \nonumber
\end{flalign}
with $k=3, l=2, \;\Yv = (Y_1, Y_2)' \sim \mathcal{N}_2(\Ov{2}, \D_{\nuv})\,$  and
$\nuv= (\s_0^2, \s_1^2, \s_2^2)' \in \Rv{3}_{+}$.

\subsection{Numerical results of estimating $\nuv$}

For cross-checking purposes, we realized our numerical computations in both Python and R based 
software tools (or packages). As we described in previous sections, we implemented own algorithms 
and methods in SciPy, SageMath and R, particularly analytical expressions \eqref{eq:orthoNE} for NE 
and the KKT algorithm for NN-(M)DOOLSE (tab.~\ref{tab:KKTalgorithm}) in orthogonal FDSLRM. 

Simultaneously, we confirmed the same estimations using CVXPY (or CVXR) package based on convex 
optimization and analogically using up-to-date standard LMM R packages nlme, MMEinR and sommer. 

Detailed computational results for all three data sets using all corresponding relevant tools can 
be found and reproduced in our collection of Jupyter notebooks (asterisk \texttt{*} represents a 
name of data and specific computational tool): \texttt{PY-estimation-*.ipynb} 
and \texttt{R-estimation-*.ipynb}.  

Table \ref{tab:estimates} summarizes all types of considered initial estimates and their 
corresponding EBLUP-NE in all four models. The computations also confirmed Theorem 
\ref{thm:eqDOOLSEMLE} therefore we wrote results for MLE and REMLE only.

\vspace{-15pt}

\begin{table}[H] 
	\renewcommand{\arraystretch}{1.25}
	\begin{center}
		\caption{Results for initial estimates and corresponding  EBLUP-NE. \newline Results for 
		(RE)MLE are identical with NN-(M)DOOLSE.}
		\label{tab:estimates}
		\begin{tabular}{c | c  c | c c}
			\toprule[1.5pt]
			
			\multicolumn{5}{c}{\textit{electricity consumption - toy model 1}} \\
			\midrule
			\textbf{method} & \textbf{estimate $\est\nuv$} & $\norm{\est\nuv}$ &\textbf{EBLUP-NE} 
			$\best\nuv$& $\norm{\best\nuv}$  \\
			\midrule
			NE    & $(3.53,0.37,1.86,0.00,1.26)'$ & $4.20$ & $(3.53,0.12,1.39,0.00,0.83)'$&$3.89$\\
			MLE   & $(2.86,0.13,1.62,0.00,1.03)'$ & $3.45$ & $(3.53,0.05,1.42,0.00,0.84)'$&$3.90$\\ 
			REMLE & $(3.34,0.09,1.59,0.00,0.99)'$ & $3.83$ & $(3.53,0.02,1.35,0.00,0.77)'$&$3.86$\\
			
			\midrule[1.5pt]
			\multicolumn{5}{c}{\textit{electricity consumption - toy model 2}} \\
			\midrule
			\textbf{method} & \textbf{estimate $\est\nuv$} & $\norm{\est\nuv}$ &\textbf{EBLUP-NE} 
			$\best\nuv$& $\norm{\best\nuv}$  \\
			\midrule
			NE    & $(1.09,2.97,1.76,0.37,1.86)'$ & $4.09$ & $(1.09,2.79,1.59,0.24,1.69)'$&$3.80$\\
			MLE   & $(0.93,2.89,1.68,0.29,1.79)'$ & $3.91$ & $(1.09,2.81,1.61,0.23,1.71)'$&$3.83$\\ 
			REMLE & $(1.09,2.87,1.67,0.28,1.77)'$ & $3.93$ & $(1.09,2.79,1.58,0.21,1.69)'$&$3.79$\\
			
			\midrule[1.5pt]
			\multicolumn{5}{c}{\textit{tourism}} \\
			\midrule
			NE    & $(0.108,0.004,0.230,0.022)'$ & $0.255$& $(0.108,0.001,0.225,0.020)'$ &$0.250$\\
			MLE	  & $(0.103,0.001,0.228,0.021)'$ & $0.251$& $(0.108,0.000,0.225,0.020)'$ &$0.250$\\ 
			REMLE & $(0.108,0.001,0.227,0.021)'$ & $0.253$& $(0.108,0.000,0.225,0.020)'$ &$0.250$\\
			
			\midrule[1.5pt]
			\multicolumn{5}{c}{\textit{cyber attacks}} \\
			\midrule
			NE    & $(0.0593,0.0255,0.0155)'$ & $0.0664$  & $(0.0593,0.0225,0.0127)'$  &$0.0647$\\
			MLE	  & $(0.0560,0.0239,0.0139)'$ & $0.0624$  & $(0.0593,0.0225,0.0125)'$  &$0.0647$ \\ 
			REMLE & $(0.0593,0.0238,0.0138)'$ & $0.0654$  & $(0.0593,0.0223,0.0124)'$  &$0.0646$\\
			
			\bottomrule[1.5pt]
			
		\end{tabular} 
	\end{center}
\end{table}

\vspace{-15pt}

Finally, we point out that thanks to SageMath and our very fast KKT algorithm, we were able 
to compute (in real time) NN-(M)-DOOLSE in toy examples with infinite precision -- as the exact 
closed-form (algebraic) numbers. To get an explicit idea, e.g. for our first toy model, we got 
these results for NN-MDOOLSE $\est{\nuv}_+$ identical with REMLE

$$
\renewcommand{\arraystretch}{1.4}
\est{\nuv}_{+} =
\begin{pmatrix}
\s_0^2 \\
\s_1^2 \\
\s_2^2 \\
\s_3^2 \\
\s_4^2
\end{pmatrix}
=
\begin{pmatrix}
-\frac{6569}{4320} \, \sqrt{3} \sqrt{2} - \frac{46513}{21600} \, \sqrt{3} - \frac{7511}{2400} \, 
\sqrt{2} + 
\frac{328739}{21600} \\
\frac{6569}{51840} \, \sqrt{3} \sqrt{2} + \frac{46513}{259200} \, \sqrt{3} + \frac{11291}{28800} \, 
\sqrt{2} - 
\frac{56089}{51840} \\
\frac{6569}{51840} \, \sqrt{3} \sqrt{2} + \frac{46513}{259200} \, \sqrt{3} + \frac{2803}{3200} \, 
\sqrt{2} - 
\frac{71213}{259200} \\
0 \\
\frac{6569}{51840} \, \sqrt{3} \sqrt{2} + \frac{46513}{259200} \, \sqrt{3} + \frac{7511}{28800} \, 
\sqrt{2} - 
\frac{203}{259200}
\end{pmatrix} \approx
\begin{pmatrix}
3.339 \\
0.094 \\
1.586 \\
0 \\
0.989
\end{pmatrix}.
$$
It means that we can compute \textit{real} errors in results for all used computational tools to 
explore their quality from the viewpoint of numerical precision. At the same time, we also watched 
a run time for particular computational tools using Jupyter notebook extension ExecuteTime, which 
provided a preliminary comparison of real execution times. 

\section{Conclusions}

We suggested and investigated an alternative, new method EBLUP-NE based on empirical BLUPs for 
estimating variances in time series modeled by FDSLRM. The estimation method  can be also 
viewed, analogously like EBLUP (\cite{rao_small_2015}, \cite{witkovsky_estimation_2012}), as a 
two-stage iterative method with one step in the iteration. 

EBLUP-NE are invariant quadratic, non-negative estimators whose simple computational form and 
subsequent first and second-moment statistical properties are given by two special matrices -- the 
Schur complement and Gram matrix \cite{zhang_schur_2005}. The method can be used not only in the 
case of normally distributed time series data, but for any absolutely continuous probability 
distribution of time series data. 

As initial starting estimates for EBLUP-NE, we can principally employ any of the previously used 
methods based on least squares (NE, NN-DOOLSE, NN-MDOOLSE) or maximum likelihood (MLE, REMLE). 
Applications of FDSLRM with the EBLUP-NE on three real data sets (electricity consumption, tourism, 
cyber attacks) providing two toy and two real models indicate that the method computationally gives 
at least comparable results with REMLE or NN-MDOOLSE, but in faster run time (approximately 10-1000 
times on the standard PC).

Due to lack of computational implementation for FDSLRM modeling, which would be generally available 
and readily applicable, and the fact that FDSLRM least squares and maximum likelihoods estimation 
procedures are more than 10 years old, we revisited and updated theoretical and computational 
knowledge dealing with these methods. Specifically, applying the convex optimization theory, we 
reformulated all estimation methods as convex optimization problems in the so-called orthogonal 
FDSLRM, the most usual form of FDSLRM used in practice. 

We formulated the KKT optimality conditions, which, unlike likelihood 
equations, are necessary and sufficient conditions for optimal solutions of (RE)MLE or NN-(M)DOOLSE 
on extended parametric space $\Ys = [ 0, \infty)^{l+1}$.
KKT optimality conditions dictate not only the exact existence conditions of estimates but they 
also solve the well-known problem dealing with standardly used likelihood 
equations \cite[chap. 12]{christensen_plane_2011}, \cite[chap. 
6]{searle_variance_2009} in LMM where their solutions for (RE)MLE or NN-(M)DOOLSE may not be 
required estimates --- they may be out of the parameter space or they may be other than the 
maximum or minimum. 

Moreover, using KKT optimality conditions in orthogonal FDSLRM, we proved the equivalence of 
NN-(M)DOOLSE and (RE)MLE with propability 1. This most important theoretical result of the paper is 
the stronger and more general result than in \cite{stulajter_estimation_2004} proved only for 
interior points of $\Ys$. 

Simultaneously, the convex optimization theory brought us to the new KKT algorithm for computing 
NN-(M)DOOLSE, equivalent to (RE)MLE, with double floating-point precision $\epsilon<10^{-15}$ as 
the 
default 
precision of outputs 
and with computational complexity $\bigO(n)$. Such an algorithm which we implemented in SciPy, 
SageMath and R, which at the default precision level is $10^7$ times more accurate and 
approximately $n^2$ faster than the best current Python- or R-based standard computational tools, 
can be used in effective computational time series research to study properties of FDSLRM (Monte 
Carlo and bootstrap methods, \cite{kreiss_bootstrap_2012}).
 
Regarding computational aspects, we were also successful in the identification and demonstration of 
consistent results for the real data applications of FDSLRM in several free, open-source current 
standard computational tools --- namely CVXPY, CVXR (R version of CVXPY) packages for convex 
optimization and LMM R packages nlme, sommer and MMEinR (our R version of Witkovsky's MATLAB mixed 
function). These results and procedures can be freely viewed in our 16 Jupyter notebooks which are 
easily readable, sharable, reproducible and modifiable directly in our GitHub repository 
(\cite{gajdos_fdslrm_2019}). Open-source Jupyter technology with Python and R packages also 
solved our problem stated in \cite{gajdos_kriging_2017} that no current package in R was directly 
and effectively suitable for FDSLRM. 
  
Finally, our investigation has also brought new questions for further 
research. There is definitely a need for more exact and detailed analysis based on a simulation 
study and general EBLUP theory (\cite{zadlo_mse_2009}, \cite{witkovsky_estimation_2012}, 
\cite{rao_small_2015}) focusing on EBLUP-NE quality with respect to previously used estimation 
methods and the performance of the EBLUP-NE method itself with different initial starting points, 
using different computational tools in various probability distributions. These research questions 
are under our investigation and will be published in the near future.   
  
In connection with our current computational research in FDSLRM, but also for real time series data 
analysis and forecasting, we started to build our own R package (see a preliminary, fully 
functional version at \cite{gajdos_fdslrmR_2019}) on mentioned LMM R packages to manipulate readily 
with FDSLRM concepts and procedures.

Finally, our results in the paper can be seen reciprocally as contributions to convex 
optimization and LMM methodology. Particularly, our convex optimization application in the context 
of time series modeling has become another one from a wide variety of application areas of convex 
optimization \cite{boyd_convex_2009}. Since FDSLRM describing $n$ observed time series values is 
also a special type of LMM, our EBLUP-NE and the very fast, accurate KKT algorithm to compute 
(RE)MLE may have potential to be used in computational research and applications dealing with LMMs. 

\vspace{-6pt}

\begin{acknowledgements} 
\leavevmode\\
We are very grateful especially to Franti\v{s}ek \v{S}tulajter (Comenius University in Bratislava, 
SK) for his guidance during our early contact and study of FDSLRM concepts and to 
Viktor Witkovsk\'y (Slovak Academy of Science, SK) for his recommendations and deep insights 
dealing with the LMM methodology and its computational tools connected to FDSLRM. We would also 
like to thank Ivan \v{Z}ezula and Daniel Klein (both from P.~J.~\v{S}af\'arik University in 
Ko\v{s}ice, SK) for a valuable feedback dealing with considered estimation methods. 

Concerning applied computational tools, we would also like to acknowledge involvement and 
recommendations of Giovanny Covarrubias-Pazaran (University of Wisconsin, US) in using sommer 
package, Steve Diamond and Stephen Boyd (Stanford University, US) in using CVXPY and CVXR. Finally, 
in connection with using SageMath, Jupyter and GitHub, our specials thanks go namely to William 
Stein (University of Washington, US), Harald Schilly (Universit\"{a}t Wien, AT), Erik Bray 
(Université Paris-Sud, FR), Luca de Feo (Université de Versailles Saint-Quentin-en-Yvelines, 
FR) and Charles J. Weiss (Augustana University, US).
\end{acknowledgements}

\vspace{-10pt}

\section*{Appendix}

\small

\paragraph{Acronyms and abbreviations}
\leavevmode\\

\vspace{-20pt}

\begin{table}[H]
	\renewcommand{\arraystretch}{1.25}
	\begin{center}
	\caption{List of acronyms used in the paper.}
	\label{tab:acronyms}
\begin{tabular}{l p{6cm} c}
\toprule[1.5pt]
\textbf{acronym} & \textbf{explanation} & \textbf{specification}\\
\midrule
\small
BLUE      & best linear unbiased estimator         & \eqref{eq:MME}, \pref{eq:MME} \\
BLUP      & best linear unbiased predictor         & \eqref{eq:MME}, \pref{eq:MME} \\
BLUP-NE   & natural estimators based on BLUP       & \eqref{def:BLUPNE}, \pref{def:BLUPNE} \\
CVXPY     & Python library for convex optimization & \pref{abv-CVXPY} \\
CVXR      & R version of CVXPY                     & \pref{abv-CVXPY} \\
DOOLSE    & double ordinary least squares estimator\newline
            without non-negativity constraints     & \eqref{eq:projectionMDOOLSE},  
                                                     \pref{eq:projectionMDOOLSE} \\
EBLUP     & empirical (plug-in) BLUP               & \pref{abv-EBLUP} \\
EBLUP-NE  & natural estimators based on EBLUP      & \pref{abv-EBLUPNE} \\
FDSLRM    & finite discrete spectrum \newline 
            linear regression model                & \eqref{def:FDSLRM}, 
\pref{def:FDSLRM} \\
KKT       & Karush-Kuhn-Tucker                     & \pref{abv-KKT} \\
LMM       & linear mixed model                     & \pref{eq:FDSLRMobservation} \\
LRM       & linear regression model                & \pref{abv-LRM} \\
MDOOLSE   & modified (unbiased) DOOLSE \newline 
without non-negativity constraints                 & \eqref{eq:projectionMDOOLSE},  
												     \pref{eq:projectionMDOOLSE} \\
(M)DOOLSE & considering both DOOLSE and MDOOLSE    & Remark \ref{rem:(M)DOOLSE},  
							                         \pref{rem:(M)DOOLSE} \\
MLE 	  & maximum likelihood estimators          & \eqref{def:MLE}, \pref{def:MLE} \\
MME 	  & Henderson's mixed model equations      & \eqref{eq:MME}, \pref{eq:MME}   \\
MMEinR 	  & R version of MATLAB function mixed     & \pref{abv-MMEinR} \\
MSE       & mean squared error 		               & \pref{abv-MSE} \\
NE        & natural estimators 					   & \pref{abv-NE}, \eqref{eq:orthoNE}, 
													 \pref{eq:orthoNE} \\
nlme      & R package for (non) linear \newline 
			mixed(-effects) models 				   & \pref{abv-nlme} \\
NN-DOOLSE & non-negative DOOLSE                    & \eqref{def:DOOLSE}, \pref{def:DOOLSE} \\
NN-MDOOLSE& non-negative modified DOOLSE           & \eqref{def:MDOOLSE}, \pref{def:MDOOLSE}\\
NN-(M)DOOLSE& considering both NN-DOOLSE \newline 
              and NN-MDOOLSE                       & Remark \ref{rem:(M)DOOLSE},  
                                                     \pref{rem:(M)DOOLSE}  \\
OLS       & ordinary least squares                 & \pref{abv-OLS} \\
REMLE     & residual (restricted) \newline 
            maximum likelihood estimator           & \eqref{def:REMLE}, \pref{def:REMLE} \\
(RE)MLE   & considering both MLE and REMLE         & Remark \ref{rem:(M)DOOLSE},  
                                                     \pref{rem:(M)DOOLSE} \\
SageMath  & free Python-based mathematics software & \pref{abv-SageMath} \\
sommer    & R package for multivariate LMMs  \newline
            solving MME                            & \pref{abv-sommer} \\
SciPy     & Scientific Python, \newline 
            Python-based ecosystem of open software& \pref{abv-SciPy}\\
\bottomrule[1.5pt]
\end{tabular}
	\end{center}
\end{table}

\newpage

\paragraph{The BLUP-NE method}

\begin{proof}[Theorem \ref{thm:propertiesBLUPNE}] \\
Employing Lemma \ref{thm:propertiesT} and the well-known standard expressions for
mean values and covariances of invariant quadratic estimators (see e.g. 
\cite{christensen_plane_2011}, \cite{puntanen_formulas_2013})) 
$\En{\Xv'\A \Xv}=\tr(\A\Sgn)$ and if  
$\Xv \sim  \NnX$ then $\Covn{\Xv'\A \Xv, \Xv'\B \Xv}=2\,\tr(\A\Sgn \B\Sgn)$, we have

\begin{enumeratep}
	
\item $\En{\bess{j}}=E(\Xv\dt_j\dt_j'\Xv)=\tr(\dt_j\dt_j'\Sgn)=
	\tr(\dt_j'\Sgn \dt_j)=
	(\dT\Sgn \dT')_{jj}$. \\
	According to (3) of Lema \ref{thm:propertiesT} $(\dT\Sgn \dT')_{jj}= 
	\D_{jj}-\s^2(\dW^{-1})_{jj}=
	\s^2_j-\s^2(\dW^{-1})_{jj}$.
	
\item 
	$\Din{\bess{j}}=
	2\tr(\dt_j\dt_j'\Sgn \dt_j\dt_j' \Sgn)=
	2\tr(\dt_j' \Sgn \dt_j' \dt_j \Sgn \dt_j)=
	2(\dT\Sgn\dT')_{jj}$ \\ As a consequence of (1)
	$\Din{\bess{j}} =2(\s^2_j-\s^2(\dW^{-1})_{jj})^ 2$.
	
\item  In a similar way as in (2)
	
	$\Covn{\bess{i},\bess{j}}=
	2\tr(\dt_i\dt_i'\Sgn \dt_j \dt_j' \Sgn)=
	2\tr(\dt_i'\Sgn \dt_j \dt_j' \Sgn \dt_i)=
	2(\dt_i'\Sgn \dt_j)^2= \\
	=2(\dT \Sgn \dT')^2_{ij}=
	2(0-\s^2 (\dW^{-1})_{ij})^2=
	2(\s^2 (\dW^{-1})_{ij})^2$.
	
\item 
	$\MSEn{\bess{j}}=\En{{(\bess{j}-\s^2_j)}^2}=
	{(\En{\bess{j}}-\s^2_j)}^2+\Din{\bess{j}}=\\
	=(\s^2_j-\s^2(\dW^{-1})_{jj}-\s^2_j)^2+2(\s^2_j-\s^2(\dW^{-1})_{jj})^2=\\
	=\s^4(\dW^{-1})^2_{jj}+2(\s^2_j-\s^2(\dW^{-1})_{jj})^2$.
\end{enumeratep} 
\end{proof}\smartqed

\paragraph{Existence of inversions for $ \dU$,  $\dHV $}
\leavevmode\\[6pt]
To prove nonsingularity of $ \dU$,  $\dHV $, it is sufficient to show that both matrices have 
non-zero determinants. Using idempotence of $\Mf$ and expression  \cite[sec. 
6.8]{searle_matrix_2017}
\begin{equation} \label{eq:propertydet}
\det (\lambda \In-\A\B) = 
\lambda^{n-l}\det (\lambda \Il-\B\A) \text{ for } \lambda \neq 0, \A \in \Rm{n}{l}, \B \in 
\Rm{l}{n},
\end{equation}
we can write for determinants of $ \dU$,  $\dHV $ 
\begin{equation}
\begin{aligned}
\det \dHV  = \det (\s_0^2 \Il+\V'\V\D) &= (-\s_0^2)^{l-n}\det (\s_0^2 \Il+\V\D\V') \\
\det \dU = \det (\s_0^2 \Il+\V'\Mf\Mf\V\D)& = (-\s_0^2)^{l-n}\det (\s_0^2 \Il+\Mf\V\D\V'\Mf) 
\end{aligned}
\end{equation} 
Now we can see that the sum of positive definite matrix $\s_0^2\Il \,\, (\s_0>0)$ and each of 
positive semidefinite matrices $\V\D\V', \,\,\Mf\V\D\V'\Mf$ is a positive definite matrix, whose 
determinant is always positive.

\paragraph{Double least squares estimators NN-(M)DOOLSE}
\leavevmode\\[6pt]
Using the standard inner product of matrices defined as $(\bullet, \bullet) = \tr(\bullet, 
\bullet)$,  generating the Euclidean norm $\norm{\bullet}$ of a matrix, and basic properties of the 
trace function,  we can easily rewrite expressions \eqref{def:DOOLSE}, \eqref{def:MDOOLSE} for the 
objective functions as quadratic forms. The particular form of these quadratic forms in 
optimization problems (M)DOOLSE is described by the following proposition.
\begin{proposition}[NN-(M)DOOLSE as quadratic optimization problems]\label{pro:cvxMDOOLSE}
	\leavevmode\\
	In an orthogonal FDSLRM the \emph{NN-(M)DOOLSE} problems are convex quadratic optimization 
	problems in 
	the form
	\begin{flalign} 
	\qquad&
	\begin{array}{ll} 
	\emph{minimize}  & f_0(\nuv)= \nuv'\G\nuv - 2\qv'\nuv\\[6pt]
	\emph{subject to} & -\I{l+1}\nuv \preceq \Ov{l+1}
	\end{array}  \label{eq:MDOOLSEquadratic}
	&&
	\end{flalign}
	$\qquad \qv = 
	\left(\begin{array}{c}
	\epsv'  \epsv\\
	(\epsv'  \vv_{1})^2 \\
	(\epsv'  \vv_{2})^2 \\
	\vdots \\
	(\epsv'  \vv_{l})^2
	\end{array}\right)$,
	$ \quad \G = \left(\begin{array}{ccccc}
	n^* & \norm{\vv_{1}}^2 & \norm{\vv_{2}}^2 & \ldots & 
	\norm{\vv_{l}}^2 \\
	\norm{\vv_{1}}^2 & \norm{\vv_{1}}^4 & 0 & \ldots & 0 \\
	\norm{\vv_{2}}^2 & 0 & \norm{\vv_{2}}^4 & \ldots & 0 \\
	\vdots & \vdots & \vdots & \ldots & \vdots \\
	\norm{\vv_{l}}^2 & 0 & 0 & \ldots & \norm{\vv_{l}}^4
	\end{array}\right)\succ0
	$ \\[12pt]
	where for \emph{NN-DOOLSE: } $n^* \equiv n$, \,\emph{NN-MDOOLSE: }$n^* \equiv n-k$,
\end{proposition}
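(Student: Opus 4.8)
The plan is to rewrite the two matrix least-squares objectives in \eqref{def:DOOLSE} and \eqref{def:MDOOLSE} as ordinary quadratic forms in $\nuv$ by expanding the squared Frobenius norm with respect to the trace inner product $(\A,\B)=\tr(\A\B)$ on symmetric matrices. First I would record the reduction that treats DOOLSE and MDOOLSE simultaneously: from $\D=\diag{\s_j^2}$ and the orthogonality condition \eqref{eq:orthogonality} we have $\Sgn=\nus_0\In+\sum_{j=1}^{l}\nus_j\vv_j\vv_j'$, and since $\F'\V=0$ gives $\F'\vv_j=\Ov{k}$ and hence $\Mf\vv_j=\vv_j$ for $j\geq1$, idempotency and symmetry of $\Mf$ yield $\Mf\Sgn\Mf=\nus_0\Mf+\sum_{j=1}^{l}\nus_j\vv_j\vv_j'$. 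Thus both objectives have the common shape $f_0(\nuv)=\norm{\Seps-\sum_{j=0}^{l}\nus_j\V_j}^2$ with $\V_j=\vv_j\vv_j'$ for $j\geq1$ and $\V_0=\In$ for DOOLSE, $\V_0=\Mf$ for MDOOLSE, exactly as in \eqref{eq:MGVVj}. Expanding, $f_0(\nuv)=\norm{\Seps}^2-2\sum_{j=0}^{l}\nus_j(\Seps,\V_j)+\sum_{i,j=0}^{l}\nus_i\nus_j(\V_i,\V_j)$, and the constant $\norm{\Seps}^2$ is dropped from the minimization.

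Next I would read off the linear and quadratic coefficients. For the linear part, $(\Seps,\V_j)=\tr(\epsv\epsv'\vv_j\vv_j')=(\epsv'\vv_j)^2$ for $j\geq1$, while $(\Seps,\V_0)=\epsv'\epsv$ in both cases --- for MDOOLSE one uses $\Mf\epsv=\epsv$ (because $\epsv=\Mf\xv$ and $\Mf$ is idempotent), so $\tr(\Seps\Mf)=\epsv'\Mf\epsv=\epsv'\epsv$; this produces the vector $\qv$ of the statement and the term $-2\qv'\nuv$. For the quadratic part, $G_{ij}=(\V_i,\V_j)=\tr(\V_i\V_j)$; the orthogonality relation $\V'\V=\diag{\norm{\vv_j}^2}$ (i.e.\ $\vv_i'\vv_j=0$ for $i\neq j$) immediately gives $G_{jj}=(\vv_j'\vv_j)^2=\norm{\vv_j}^4$ and $G_{ij}=(\vv_i'\vv_j)^2=0$ for $i\neq j$ with $i,j\geq1$; the border entries are $G_{0j}=\tr(\V_0\vv_j\vv_j')=\vv_j'\V_0\vv_j=\norm{\vv_j}^2$ (again via $\Mf\vv_j=\vv_j$); and $G_{00}=\tr(\V_0^2)$ equals $\tr(\In)=n$ for DOOLSE and $\tr(\Mf)=n-r(\F)=n-k$ for MDOOLSE, i.e.\ $G_{00}=n^*$. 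This is precisely the arrowhead matrix $\G$ claimed.

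Finally I would establish $\G\succ0$ and conclude. Since $r(\F\;\V)=k+l$ forces every column of $\V$ to be nonzero, the lower-right block $\diag{\norm{\vv_j}^4}$ of $\G$ is positive definite, so by the Schur-complement criterion $\G\succ0$ if and only if the Schur complement $n^*-\sum_{j=1}^{l}\norm{\vv_j}^4/\norm{\vv_j}^4=n^*-l$ is positive; since $n^*\geq n-k>l$ under the identifiability assumption $n>k+l$, this holds. (Equivalently, $\G$ is the Gram matrix of the family $\{\V_0,\V_1,\ldots,\V_l\}$, which is linearly independent because a nonzero combination of the rank-one matrices $\vv_j\vv_j'$ has rank $\leq l<r(\V_0)$.) A positive definite $\G$ makes $f_0(\nuv)=\nuv'\G\nuv-2\qv'\nuv$ a strictly convex quadratic function, and the constraint set $\{\nuv:-\I{l+1}\nuv\preceq\Ov{l+1}\}=\Rv{l+1}_+$ is polyhedral, hence convex; therefore both NN-DOOLSE and NN-MDOOLSE are (strictly) convex quadratic programs of the stated form. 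There is no serious obstacle here; the only points that need a moment's care are the uniform treatment of the $\Mf$ factors --- checking $\Mf\epsv=\epsv$ and $\Mf\vv_j=\vv_j$, so that DOOLSE and MDOOLSE share $\qv$ and differ in $\G$ only through $G_{00}\in\{n,n-k\}$ --- and tying $\G\succ0$ back to the identifiability inequality $n>k+l$.
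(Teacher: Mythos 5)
Your proof is correct and follows essentially the same route the paper takes (the paper only sketches it in the Appendix): expanding the Frobenius norms of \eqref{def:DOOLSE} and \eqref{def:MDOOLSE} via the trace inner product and the orthogonality condition \eqref{eq:orthogonality} to read off $\qv$ and the arrowhead Gram matrix $\G$ with $G_{00}=n^*$. You additionally supply the Schur-complement argument for $\G\succ0$ via $n^*-l>0$ under $n>k+l$, a detail the paper asserts without proof; this is a welcome completion rather than a deviation.
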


According to the fundamental KKT theorem of convex optimization \cite[chap.5]{boyd_convex_2009} 
which handles convex optimization problems with constraints in the form of inequalities giving a 
list of the so-called \textit{Karush-Kuhn-Tucker (KKT) optimality conditions}, we consider our 
optimization 
problem NN-(M)DOOLSE in the Lagrange multiplier form with Lagrangian $L$ as a sum of the objective 
function $f_0(\nuv)$ and a linear combination of multipliers $\lambdav = (\lambdas_0, \lambdas_1, 
\ldots, \lambdas_l)' \succeq \Ov{l+1}$ and constraints $-\nuv =(\nus_0, \nus_1, \ldots, \nus_l)'
\preceq \Ov{l+1}$. 

Then KKT optimality conditions for \eqref{eq:MDOOLSEquadratic} become necessary and sufficient 
conditions of optimality, satisfied at any local optimal solution, being represented by three sets 
of conditions: (1) primal and dual 
feasibility: $\nuv \succeq \Ov{l+1}$, $\lambda\succeq \Ov{l+1}$, (2) stationarity of the 
Lagrangian: $\nabla_{\nuv} L(\nuv, \lambdav) = \Ov{l+1}$ and (3) complementary slackness:  
$\nus_j\lambdas_j=0,\, j\in\{0,1, \ldots,l\}$. Computing the gradient of the Lagrangian, we arrive  
at the following proposition. 

\begin{proposition}[KKT optimality conditions for NN-(M)DOOLSE]\label{pro:KKTcvxMDOOLSE}
	\leavevmode \\
	In an orthogonal FDSLRM consider the non-negative \emph{NN-(M)DOOLSE} problems in~the Lagrange 
	multiplier form with the Lagrangian
	$$L(\nuv, \lambdav) =f_0(\nuv)-\SumZ{j}{l}\lambdas_j\nus_j.$$
	Then, a \textbf{necessary and sufficient condition} for $\est\nuv, \est\lambdav$ to be 
	problems' optimal solution is
	\begin{enumeratep}
		\item  $\nuv \succeq \Ov{l+1}, \quad \lambdav \succeq \Ov{l+1}$
		\item  $\begin{pmatrix}
		\G & -\I{l+1} \\
		\end{pmatrix}
		\begin{pmatrix}
		\nuv \\
		\lambdav
		\end{pmatrix}
		= \qv$
		\item  $\nuv \circ \lambdav = \Ov{l+1} \quad \left( \Leftrightarrow \nus_j\lambdas_j=0,\,\, 
		j\in\{0,1, \ldots,l\}\right)$.
	\end{enumeratep}
\end{proposition}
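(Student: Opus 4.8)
The plan is to obtain Proposition~\ref{pro:KKTcvxMDOOLSE} as a direct specialization of the Karush--Kuhn--Tucker theorem for convex programs, with essentially all the substantive work already contained in Proposition~\ref{pro:cvxMDOOLSE}. First I would invoke Proposition~\ref{pro:cvxMDOOLSE}: in an orthogonal FDSLRM the \emph{NN-(M)DOOLSE} problem is the quadratic program ``minimize $f_0(\nuv)=\nuv'\G\nuv-2\qv'\nuv$ subject to $-\I{l+1}\nuv\preceq\Ov{l+1}$'' with $\G\succ0$. Hence $f_0$ is a strictly convex, everywhere-differentiable function, the feasible set $\Ys^{*}=\{\nuv\in\Rv{l+1}:\nuv\succeq\Ov{l+1}\}$ is a polyhedral convex set, and every inequality constraint $g_j(\nuv)=-\nus_j\le 0$, $j=0,1,\ldots,l$, is affine. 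This places the task squarely inside the class of convex optimization problems with affine inequality constraints.

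Next I would quote the KKT theorem in the form valid for that class \cite[chap.~5]{boyd_convex_2009}: because all the constraints are affine, the linear-constraint qualification holds automatically, so the KKT conditions are \emph{both necessary and sufficient} for global optimality. Explicitly, $\est\nuv$ solves the program if and only if there exists a multiplier vector $\est\lambdav$ such that the pair $(\est\nuv,\est\lambdav)$ satisfies primal feasibility $\nuv\succeq\Ov{l+1}$, dual feasibility $\lambdav\succeq\Ov{l+1}$, stationarity of the Lagrangian $\nabla_{\nuv}L(\nuv,\lambdav)=\Ov{l+1}$, and complementary slackness $\lambdas_j\,g_j(\nuv)=0$ for every $j$. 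It then remains only to write these four blocks out for the Lagrangian $L(\nuv,\lambdav)=f_0(\nuv)-\SumZ{j}{l}\lambdas_j\nus_j=\nuv'\G\nuv-2\qv'\nuv-\lambdav'\nuv$.

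Primal and dual feasibility are already in the claimed shape and give item~(1). For stationarity I would compute, using symmetry of $\G$, that $\nabla_{\nuv}L=2\G\nuv-2\qv-\lambdav$; equating this to $\Ov{l+1}$ and performing the customary rescaling of the multiplier by the factor $2$ --- harmless for both $\lambdav\succeq\Ov{l+1}$ and the complementary-slackness relation --- gives $\G\nuv-\lambdav=\qv$, i.e.\ $\begin{pmatrix}\G & -\I{l+1}\end{pmatrix}\begin{pmatrix}\nuv\\\lambdav\end{pmatrix}=\qv$, which is item~(2). Finally, with $g_j(\nuv)=-\nus_j$ the complementary-slackness relations become $\lambdas_j(-\nus_j)=0$, i.e.\ $\nus_j\lambdas_j=0$ for all $j\in\{0,1,\ldots,l\}$, equivalently the Hadamard-product identity $\nuv\circ\lambdav=\Ov{l+1}$ --- item~(3).

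I do not expect a genuine analytic obstacle: the content is having Proposition~\ref{pro:cvxMDOOLSE} available (so that $\G\succ0$ and the problem is a convex quadratic program) together with the routine gradient computation. The only points calling for a little care are verifying that a constraint qualification is in force, so that the KKT system is \emph{necessary} --- which is immediate here because every constraint is linear, so no Slater-type argument is even needed --- and the sign bookkeeping for the constraint functions $g_j=-\nus_j$ together with the factor-$2$ rescaling of the multipliers when matching my stationarity equation to the coefficient matrix $\begin{pmatrix}\G & -\I{l+1}\end{pmatrix}$ appearing in the statement.
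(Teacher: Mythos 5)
Your proposal is correct and follows essentially the same route as the paper: the paper likewise invokes the KKT theorem of convex optimization for the quadratic program of Proposition~\ref{pro:cvxMDOOLSE}, forms the Lagrangian $L(\nuv,\lambdav)=f_0(\nuv)-\SumZ{j}{l}\lambdas_j\nus_j$, and obtains items (1)--(3) by computing its gradient. Your explicit handling of the affine-constraint qualification and the factor-$2$ rescaling of the multipliers is a slightly more careful write-up of the same argument.
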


As for the existence of optimal solutions, we can apply the well-known basic results for minimizing 
quadratic forms \cite[chap.~3]{bertsekas_convex_2009}. Since the Hessian of 
\eqref{eq:MDOOLSEquadratic} equal to $\nabla_{\nuv}^2 f_0(\nuv) = 2\G$ is a positive 
definite matrix, like $\G$, then $f_0(\nuv)$ is strictly convex and also coercive proper function.
These two properties are sufficient conditions for NN-(M)DOOLSE \cite[Weierstrass' theorem, 
p.~119]{bertsekas_convex_2009} to have always a unique global optimal 
solution $\est\nuv$.

Employing the familiar Bessel's inequality $\Sum{j}{l}(\epsv'\vv_j/\norm{\vv_j})^2 \leq 
\norm{\epsv}^2$, where equality holds if and only if $\epsv \in \LS{\V}$, it is also not difficult 
to see that KKT conditions (1-3) rewritten as the following system
\vspace{-12pt}
\begin{align}
n^*\nus_0 + \Sum{j}{l} \nus_j \norm{\vv_j}^2 - \lambdas_0 & = \norm{\epsv}^2 \label{eq:KKT0}\\
\nus_0 \norm{\vv_j}^2 + \nus_j \norm{\vv_j}^4 - \lambdas_j & = (\epsv'\vv_j)^2; \, \,j = 1,\ldots, 
l \label{eq:KKTj}\\[6pt]
\nus_j\lambdas_j =0, \,\,  \lambdas_j \geq   0, & \, \,  \nus_j  \geq 0; \, \, j = 0,1,\ldots, l 
\label{eq:slack}
\end{align}
imply an optimal solution $\est\nuv$ with $\est\nus_0=0$ if and only if the vector of OLS residuals 
$\epsv$ belongs to the column space of $\V$. Since probability of $\epsv \in \LS{\V}$ is zero, our 
existence conclusions can be summarized in the next proposition.

\begin{proposition}[Existence of NN-(M)DOOLSE]\label{pro:cvxMDOOLSEexist}
	\leavevmode \\
	In an orthogonal FDSLRM the following holds
	\begin{enumerate}[label=\pitem]
		\item \emph{NN-(M)DOOLSE} problems are strictly quadratic optimization problems.
		\item Their solutions $\est\nuv$ always exist and they are unique global minimizers.
		\item $\est\nus_0=0 \Leftrightarrow \epsv \in \LS\V \Leftrightarrow  \xv \in \LS{\F, \V} $.
		\item $P(\est\nus_0=0)=P(\epsv \in \LS\V)=P\left(\xv\in\LS{\F, \V}\right)=0$.
	\end{enumerate}
\end{proposition}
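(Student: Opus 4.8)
The plan is to obtain Proposition~\ref{pro:cvxMDOOLSEexist} by packaging the convexity analysis and the KKT characterization already set up above, so the argument is mainly organizational. \textbf{Claims (1)--(2).} First I would invoke Proposition~\ref{pro:cvxMDOOLSE}, which presents both NN-DOOLSE and NN-MDOOLSE as the minimization of $f_0(\nuv)=\nuv'\G\nuv-2\qv'\nuv$ over the closed convex cone $\Ys^{*}=[0,\infty)^{l+1}$ with $\G\succ0$ (the positive definiteness follows from the linear independence, under the identifiability assumption $r(\F\;\V)=k+l<n$, of the generating matrices $\V_0,\V_1,\ldots,\V_l$ in the trace inner product). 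Since the Hessian $\nabla^2_{\nuv}f_0=2\G$ is then positive definite, $f_0$ is strictly convex, which gives claim~(1); it is also coercive and, being a polynomial, continuous and proper, so by Weierstrass' theorem it attains its infimum on the nonempty closed set $\Ys^{*}$, strict convexity makes the minimizer unique, and convexity of the whole program makes it the global minimizer, which is claim~(2).

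\textbf{Claim (3).} Next I would work through the KKT system \eqref{eq:KKT0}--\eqref{eq:slack}, which by Proposition~\ref{pro:KKTcvxMDOOLSE} is necessary and sufficient for optimality. Assuming $\est\nus_0=0$, equation~\eqref{eq:KKTj} becomes $\est\nus_j\norm{\vv_j}^4-\lambdas_j=(\epsv'\vv_j)^2$ with $\est\nus_j,\lambdas_j\ge0$ and $\est\nus_j\lambdas_j=0$; a one-line case split on the sign of $\est\nus_j$ pins down $\est\nus_j=(\epsv'\vv_j)^2/\norm{\vv_j}^4$ and $\lambdas_j=0$ for every $j$. Substituting back into~\eqref{eq:KKT0} with $\nus_0=0$ yields $\Sum{j}{l}(\epsv'\vv_j)^2/\norm{\vv_j}^2-\lambdas_0=\norm{\epsv}^2$, so $\lambdas_0\ge0$ is equivalent to $\Sum{j}{l}(\epsv'\vv_j/\norm{\vv_j})^2\ge\norm{\epsv}^2$; Bessel's inequality provides the opposite inequality, so equality holds, and under the orthogonality condition~\eqref{eq:orthogonality} the vectors $\vv_j/\norm{\vv_j}$ are an orthonormal basis of $\LS{\V}$, hence Bessel equality holds precisely when $\epsv\in\LS{\V}$. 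For the converse I would just check that, when $\epsv\in\LS{\V}$, the point $\nus_0=0$, $\nus_j=(\epsv'\vv_j)^2/\norm{\vv_j}^4$, $\lambdav=\Ov{l+1}$ satisfies conditions (1)--(3) of Proposition~\ref{pro:KKTcvxMDOOLSE} and is therefore the unique optimum, so $\est\nus_0=0$. Finally, since in an orthogonal FDSLRM $\Mf\V=\V$ and $\xv$ is the sum of its orthogonal projection onto $\LS{\F}$ and the residual $\epsv=\Mf\xv$, I get $\epsv\in\LS{\V}\Leftrightarrow\xv\in\LS{\F}+\LS{\V}=\LS{\F,\V}$, which closes the chain of equivalences.

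\textbf{Claim (4) and the main obstacle.} Claim~(3) identifies the events $\{\est\nus_0=0\}$, $\{\epsv\in\LS{\V}\}$ and $\{\xv\in\LS{\F,\V}\}$, so it only remains to observe that $\LS{\F,\V}$ is a linear subspace of $\Rv{n}$ of dimension $r(\F\;\V)=k+l<n$, hence a proper subspace and a Lebesgue-null set, which an absolutely continuous distribution of $\Xv$ charges with probability zero. The one genuinely delicate step is the KKT bookkeeping in~(3) --- extracting $\est\nus_j$ and $\lambdas_j$ from the complementarity relations and recognizing the resulting identity as the equality case of Bessel's inequality; the rest is a direct appeal to the already-established convexity and KKT propositions. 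I would therefore dispatch (1)--(2) in a few lines, devote the bulk of the proof to (3), and finish (4) in two sentences.
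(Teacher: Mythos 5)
Your proposal is correct and follows essentially the same route as the paper: strict convexity ($\nabla^2_{\nuv}f_0 = 2\G \succ 0$) plus coercivity and Weierstrass' theorem for (1)--(2), the KKT system \eqref{eq:KKT0}--\eqref{eq:slack} combined with the equality case of Bessel's inequality for (3), and the Lebesgue-null-subspace argument for (4). You merely spell out the complementarity bookkeeping and the $\epsv\in\LS{\V}\Leftrightarrow\xv\in\LS{\F,\V}$ step that the paper leaves as "not difficult to see," which is fine.
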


\paragraph{KKT algorithm for NN-(M)DOOLSE}
\leavevmode\\
If we consider $\nus_0 = 0$ which occurs if and only if $\epsv = \Sum{j}{l} \alpha_j 
\vv_j$ in KKT conditions \eqref{eq:KKT0}-\eqref{eq:slack} then the optimal solution $\est\nuv_+$ 
for NN-(M)DOOLSE is trivial: $\nus_{+0} = 0, \nus_{+j} = \alpha_j^2, j =1, \ldots, l$. 

For  $\nus_0 \neq 0$ implying 
$\lambdas_0 =0$, $\nus_j$ and $\lambdas_j$ cannot be simultaneous zero otherwise it would lead to 
contradiction between \eqref{eq:KKT0} and \eqref{eq:KKTj}. 
Therefore, in the case $\nus_0 \neq 0$, the complementary slackness condition (3) 
$\nus_j\lambdas_j=0,\,\, j\in\{1, \ldots,l\} 
$ can be rewritten in the form $\,b_j \nus_j (1-b_j) \lambdas_j = 0 $, where $b_j$ is an auxiliary 
indicator, which is zero if $\nus_j = 0, \lambdas_j \ne 0 $ and one if $\nus_j \ne 0, \lambdas_j = 
0 
$. 

Using vector $\bvv = (b_1, b_2, \ldots, b_l)' \in 
\{0,1\}^l$, the derived KKT conditions (2) in Proposition \ref{pro:KKTcvxMDOOLSE} can be described 
by a $(l+1) \times (l+1)$ matrix function $\Kb = \{K_{ij}(\bvv)\}$ and a vector function $\gv(\bvv) 
= (\gs_0(\bvv),\gs_1(\bvv), \ldots, \gs_l(\bvv))'$  as
\begin{equation} \label{eq:Kbgb}
\Kb \gv(\bvv) = \qv,
\end{equation}
where
$$
K_{ij}(\bvv) =\left\{
\begin{array}{rl}
0, & \text{ if } i=0, b_j = 0,\\
-1, & \text{ if } i=j\ne 0, b_j = 0, \\
G_ {ij}, & \text{ otherwise.}
\end{array}
\right. \quad
\gs_j(\bvv) = \left\{
\begin{array}{cl}
\nus_0, & \text{ if } j = 0\\
b_j \nus_j +(1-b_j) \lambdas_j, & \text{ otherwise.}
\end{array}
\right.
$$
Applying the Banachiewicz formula, we can write for the inverse of $\Kb$ the 
following analytic expression
$$K(\bvv)^{-1} = \phi^{-1}
\begin{pmatrix}
1 & -\bvv'\GV \invDb \\
-\invDb \GV \jv{l} & \quad \phi\invDb+\invDb \GV \jv{l} \bvv'\GV\invDb
\end{pmatrix}$$
\noindent where \\
\indent$\Db = \diag{b_j \norm{\vv_j}^4+b_j-1}$, $\GV = \diag{\norm{\vv_j}^2}$, \\
\indent$\jv{l} = (1, 1, \ldots, 1)' \in \Rv{l}$, $\phi = n^{*} -\bvv'\GV \invDb \GV \jv{l}$. \\

\noindent Finally, Proposition \ref{pro:cvxMDOOLSEexist} guarantees the existence of unique 
auxiliary vectors $\dest{\gv}{},\dest{\bvv}{} $:
$$\dest{\gv}{}=\{\gv; \gv=\Kb^{-1}\qv\geqq 0, \bvv \in \{0,1\}^l\}$$
$$\dest{\bvv}{}=\{\bvv; \dest{\gv}{}=\Kb^{-1}\qv, \bvv \in \{0,1\}^l\}$$\\
\noindent  Based on vectors $\dest{\gv}{},\dest{\bvv}{} $, the NN-(M)DOOLSE $\nndest{\nuv}= 
(\dest{\nus}{+0}, \dest{\nus}{+1}, \ldots, \dest{\nus}{+l})'$ of $\nuv$ as a solution of KKT 
conditions has the final form
\begin{equation*}
\dest{\nus}{+j} =\left\{
\begin{array}{rl}
0, & \text{ if } j\ne 0, b_j = 0, \\
\dest{\gs}{j}, & \text{ otherwise.}
\end{array}
\right.
\end{equation*}
Thanks to Proposition 4, it is worth to mention that the matrix system \eqref{eq:Kbgb} includes 
also 
solutions with $\nus_0 = 0$.

\paragraph{Maximum likelihood estimators (RE)MLE}
\leavevmode\\[6pt]
Generally, (RE)MLE in FDSLRM are not convex optimization problems with respect to $\nuv$ and they 
do not exist for $\nus_0 = 0$ which occurs if and only if $\epsv \in \LS{V}$. If we apply the 
following 
bijective transformation defined on $\Ys = (0,\infty)\times [0, 
\infty)^l$ 
\begin{equation} \label{def:bijection}
d_0=\tfrac{1}{\nus_0},\,\,d_j=\tfrac{\nus_j}{\nus_0(\nus_0+\norm{\vv_j}^2\nus_j)},\,\,j\in\{1,
\ldots,l\},
\end{equation}
we can convert the (RE)MLE 
problems in the form of equivalent convex problems whose solutions can be readily converted to a 
(RE)MLE solutions by the inverse transformation to \eqref{def:bijection}. Here are more detailed 
steps of the conversion. 

In orthogonal FDSLRM $(\F'\V =0, \V'\V = \diag{\norm{\vv_j}^2})$, $\InvSgn$ \cite[lem 
2.1]{stulajter_mse_2003} and $\epsv' \InvSgn \epsv$ are equal to 
\begin{equation} \label{eq:InvSigma}
\begin{gathered}
\InvSgn = \tfrac{1}{\nus_0} \In - 
\Sum{j}{l}\tfrac{\nus_j}{\nus_0(\nus_0+\norm{\vv_j}^2\nus_j)}\vv_j\vv_j' = 
d_0 \In - \V \diag{d_j} \V', \\
\gnorm{\epsv}{\InvSgn}^2 = \epsv'\InvSgn\epsv = d_0\epsv'\epsv-\epsv'\V\,\diag{d_j}\V'\epsv. 
\end{gathered}
\end{equation}

Using orthogonality conditions \eqref{eq:orthogonality} and expression \eqref{eq:propertydet}, we 
get for determinants in 
(RE)ML loglikelihoods \eqref{def:MLE}, 
\eqref{def:REMLE}
\begin{equation} \label{eq:determinants}
\begin{gathered} 
\det(\F'\InvSgn \F) = \det (d_0 \F'\F) = d_0^k \det(\F'\F), \\
\det \InvSgn = d_0^{n-l}\det(d_0 \Il - \V'\V \diag{d_j})= d_0^{n-l}\prod_{j = 1}^{l}(d_0-d_j 
\norm{\vv_j}^2).
\end{gathered}
\end{equation}

Since the chosen bijective transformation \eqref{def:bijection} also transforms convex constraints 
for $\nuv$ to convex constraints for $\dvv$, substituting \eqref{eq:InvSigma}, 
\eqref{eq:determinants} into objective functions  \eqref{def:MLE}, 
\eqref{def:REMLE} of (RE)MLE, we  can formulate the following proposition.

\begin{proposition}[Equivalent (RE)MLE convex problems]\label{pro:cvxREMLE}
	\leavevmode\\
	Let assume  $\epsv \not\in \LS{\V}$ and consider a \textit{bijective transformation} in the 
	following form:
	$$d_0=\dfrac{1}{\nus_0},\,\,d_j=\dfrac{\nus_j}{\nus_0(\nus_0+\norm{\vv_j}^2\nus_j)},\,\,j\in\{1,
	\ldots,l\}.$$ 	
	Then, in a Gaussian orthogonal FDSLRM the \emph{(RE)MLE} problems are equivalent to convex 
	problems:
	\begin{flalign}
	&
	\begin{array}{ll} 
	\emph{minimize}  & {\small\quad f_0(\dvv)=-(n^*\!-l)\ln d_0 - \displaystyle\Sum{j}{l} 
		\ln(d_0-d_j\norm{\vv_j}^2)  
		+d_0\epsv'\epsv-\epsv'\V\,\diag{d_j}\V'\epsv } \\[6pt]
	\emph{subject to}& \quad d_0 > \max\{d_j\norm{\vv_j}^2, j = 1, \ldots, l\} \\
	& \quad d_j \geq 0, j=1,\ldots l
	\end{array} 
	&&  \nonumber
	\end{flalign}
	where for $\emph{MLE: } n^* \equiv n,$ and  $ \emph{for REMLE: } n^* \equiv n - k $.
\end{proposition}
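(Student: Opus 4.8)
The plan is to reduce the maximization of the (RE)ML loglikelihood over $\nuv\in\Ys$ to the minimization of $f_0(\dvv)$ by a direct substitution. First I would note that maximizing $\lm$ (respectively $\lr$) is the same as minimizing $-2\lm$ (respectively $-2\lr$), and that in an orthogonal FDSLRM the marginal residuals $\xv-\F\bluebn$ coincide with the OLS residuals $\epsv$, so $\gnorm{\xv-\F\bluebn}{\InvSgn}^2 = \epsv'\InvSgn\epsv$, which by \eqref{eq:InvSigma} equals $d_0\epsv'\epsv-\epsv'\V\diag{d_j}\V'\epsv$. Substituting this together with the determinant formulas \eqref{eq:determinants} into \eqref{eq:loglikelihoods}, I obtain for MLE
\[
-2\lm(\nuv,\xv) = -(n-l)\ln d_0 - \Sum{j}{l}\ln\!\left(d_0-d_j\norm{\vv_j}^2\right) + d_0\epsv'\epsv - \epsv'\V\diag{d_j}\V'\epsv,
\]
which is precisely $f_0(\dvv)$ with $n^{*}\equiv n$. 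For REMLE the only additional term is $-\ln\det(\F'\InvSgn\F) = -k\ln d_0 - \ln\det(\F'\F)$; the constant $-\ln\det(\F'\F)$ does not depend on $\nuv$ (hence not on $\dvv$) and may be discarded, while the $-k\ln d_0$ shifts the coefficient of $\ln d_0$ from $-(n-l)$ to $-(n-k-l)=-(n^{*}-l)$ with $n^{*}\equiv n-k$, again reproducing $f_0(\dvv)$.

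The second step is to verify that \eqref{def:bijection} is a bijection from $\Ys=(0,\infty)\times[0,\infty)^l$ onto the feasible set $\{\dvv:\,d_0>\max_j d_j\norm{\vv_j}^2,\ d_j\geq0\}$. From $\nus_0>0$ and $\nus_j\geq0$ one gets $d_0=1/\nus_0>0$, $d_j\geq0$, and the key identity $d_0-d_j\norm{\vv_j}^2=1/(\nus_0+\norm{\vv_j}^2\nus_j)>0$, so the image lies in that feasible set; conversely, solving \eqref{def:bijection} for $\nuv$ gives the well-defined inverse $\nus_0=1/d_0$ and $\nus_j=d_j/\left(d_0(d_0-d_j\norm{\vv_j}^2)\right)$, which lands in $\Ys$ exactly on that set. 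Hence the two problems have the same feasible region up to the change of variables and equal objective values, so the same set of optimizers — i.e. they are equivalent.

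Finally, although the statement claims only equivalence to a \emph{convex} problem, I would record why $f_0$ is convex on the feasible set: $d_0\epsv'\epsv$ and $-\epsv'\V\diag{d_j}\V'\epsv=-\Sum{j}{l}d_j(\epsv'\vv_j)^2$ are affine in $\dvv$; $-(n^{*}-l)\ln d_0$ is convex because $n^{*}-l>0$ by the identifiability assumption $n>k+l$; and each $-\ln(d_0-d_j\norm{\vv_j}^2)$ is the composition of the convex function $-\ln(\cdot)$ with a positive affine function of $\dvv$, hence convex; a sum of convex functions is convex. The main obstacle is purely bookkeeping rather than conceptual: one must keep the determinant exponents straight (the $d_0^{\,n-l}$ prefactor of $\det\InvSgn$ against the $d_0^{\,k}$ in $\det(\F'\InvSgn\F)$) so that the coefficient of $\ln d_0$ comes out as $-(n^{*}-l)$ in both cases, and one must check carefully that $d_0>d_j\norm{\vv_j}^2$ is both necessary and sufficient for the inverse image of \eqref{def:bijection} to lie in $\Ys$, in particular that $\nus_j=0\Leftrightarrow d_j=0$.
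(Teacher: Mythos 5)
Your proposal is correct and follows essentially the same route as the paper: substitute $\InvSgn = d_0\In-\V\diag{d_j}\V'$ and the determinant identities \eqref{eq:InvSigma}, \eqref{eq:determinants} into the (RE)ML loglikelihoods, and verify that \eqref{def:bijection} maps $\Ys$ bijectively onto the convex feasible set (your explicit inverse and convexity check are just a more detailed write-up of what the paper asserts). One bookkeeping slip: in the minimized objective $-2\lr$ the extra REML term enters with a plus sign, $+\ln\det(\F'\InvSgn\F)=+k\ln d_0+\ln\det(\F'\F)$, and it is this $+k\ln d_0$ that shifts the coefficient of $\ln d_0$ from $-(n-l)$ to $-(n-k-l)$; your stated shift by $-k\ln d_0$ would give $-(n+k-l)$, although the final coefficient $-(n^*-l)$ you report is correct.
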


Once again, we can write the equivalent (RE)MLE problem  with $2l$ constraints in the corresponding 
Lagrangian multiplier form based on the following Lagrangian with $2l$ multipliers $(\lambda_1, 
\ldots, 
\lambda_l, \mu_1, \ldots, \mu_l)' = (\lambdav', \muv')' \in \Rv{2l}_+$ 
\vspace{-6pt}
$$L(\dvv,\lambdav,\boldsymbol{\mu})=f_0(\dvv)-\Sum{j}{l}\left[\lambda_jd_j+\mu_j(d_0-d_j||\vv_j||^2)\right].$$
Then by the direct computation, we obtain KKT conditions for equivalent (RE)MLE describing (1) 
primal 
and dual feasibility for $\dvv$ and $(\lambdav', \muv')'$, (2) stationarity of the Lagrangian 
($\nabla_{\dvv}L(\dvv, \lambdav, \muv) = 0$)  and (3) complementary slackness, all described by the 
next proposition.  

\begin{proposition}[KKT optimality conditions for equivalent (RE)MLE]\label{pro:KKTcvxREMLE}
	\leavevmode\\
	Consider equivalent convex optimization problems to \emph{(RE)MLE} in the Lagrangian multiplier 
	form  
	\vspace{-6pt}
	$$L(\dvv,\lambdav,\muv)=f_0(\dvv)-\Sum{j}{l}\left[\lambda_jd_j+\mu_j(d_0-d_j||\vv_j||^2)\right].$$
	Then, a \textbf{necessary and sufficient condition} for 	
	$\hat{\dvv},\hat{\lambdav},\hat{\muv}$	to be problems' optimal solution is 
		\begin{enumeratep}
	\vspace{6pt}
	\item 
	$d_0-d_j\norm{\vv_j}^2>0,\,\, d_j \geq 0, \,\,\lambda_j\ge0,\,\,\mu_j\ge0,\,\, 
	j\in\{1,\ldots,l\}$
		\item  
		$||\epsv||^2-\dfrac{n^{*}-l}{d_0}-\displaystyle\Sum{j}{l}\left(\mu_j+
		\dfrac{1}{d_0-d_j||\vv_j||^2}\right)=0, $		\\
		$\qquad \dfrac{||\vv_j||^2}{d_0-d_j||\vv_j||^2}-(\epsv'\vv_j)^2-
		\lambda_j+\mu_j||\vv_j||^2=0,\,\,j\in\{1,\ldots,l\}$ 
		\item 
		\vspace{6pt}
		$-d_j\lambda_j=0,\,\,-(d_0-d_j||\vv_j||^2)\mu_j=0,
		j\in\{1,\ldots,l\}$.
	\end{enumeratep}
\end{proposition}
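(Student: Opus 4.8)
The plan is to derive the claim as a specialisation of the KKT theorem for convex optimization (\cite[chap.~5]{boyd_convex_2009}, \cite{bertsekas_convex_2009}), in the same spirit as Proposition \ref{pro:KKTcvxMDOOLSE} for NN-(M)DOOLSE. Recall from Proposition \ref{pro:cvxREMLE} that, after the bijective transformation \eqref{def:bijection}, the (RE)MLE problem is a convex program in $\dvv$: the feasible set $\{d_0-d_j\norm{\vv_j}^2>0,\; d_j\ge 0,\; j=1,\ldots,l\}$ is an intersection of half-spaces, hence convex, and the only non-linear terms of $f_0$, namely $-(n^*-l)\ln d_0$ and $-\ln(d_0-d_j\norm{\vv_j}^2)$, are convex because $n^*-l>0$ (identifiability gives $n>k+l$, so $n^*=n>l$ for MLE and $n^*=n-k>l$ for REMLE) and because $-\ln(\cdot)$ composed with an affine map is convex; the remaining part $d_0\norm{\epsv}^2-\Sum{j}{l}d_j(\epsv'\vv_j)^2$ is linear.

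First I would check a constraint qualification. All inequality constraints are affine and the feasible region is non-empty --- e.g.\ $d_1=\dots=d_l=1$ together with $d_0=2+\max_j\norm{\vv_j}^2$ lies strictly inside it --- so the refined Slater condition for affine constraints holds and strong duality is in force. Together with the existence of a minimiser (Proposition \ref{pro:cvxREMLEexist}), the convex KKT theorem then states that a feasible $\hat{\dvv}$ is a global minimiser, with $(\hat{\lambdav},\hat{\muv})$ dual-optimal, if and only if $\hat{\lambdav},\hat{\muv}\succeq\Ov{l}$ (dual feasibility), $\nabla_{\dvv}L(\hat{\dvv},\hat{\lambdav},\hat{\muv})=\Ov{l+1}$ (stationarity), and $\hat\lambda_j\hat d_j=0$, $\hat\mu_j(\hat d_0-\hat d_j\norm{\vv_j}^2)=0$ for all $j$ (complementary slackness). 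Items (1) and (3) of the proposition are exactly primal/dual feasibility and this complementary slackness; it remains only to make the stationarity condition explicit.

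That last step is a routine differentiation. Writing $\epsv'\V\diag{d_j}\V'\epsv=\Sum{j}{l}d_j(\epsv'\vv_j)^2$, one gets $\pder{f_0}{d_0}=\norm{\epsv}^2-\tfrac{n^*-l}{d_0}-\Sum{j}{l}\tfrac{1}{d_0-d_j\norm{\vv_j}^2}$ and $\pder{f_0}{d_j}=\tfrac{\norm{\vv_j}^2}{d_0-d_j\norm{\vv_j}^2}-(\epsv'\vv_j)^2$; adding the derivatives of $-\Sum{j}{l}[\lambda_jd_j+\mu_j(d_0-d_j\norm{\vv_j}^2)]$ yields $\pder{L}{d_0}=\pder{f_0}{d_0}-\Sum{j}{l}\mu_j$ and $\pder{L}{d_j}=\pder{f_0}{d_j}-\lambda_j+\mu_j\norm{\vv_j}^2$, and equating these to zero reproduces verbatim the two equations of item (2).

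The genuinely delicate point, and the one I expect to require most care, is that $d_0-d_j\norm{\vv_j}^2>0$ is a \emph{strict} inequality describing the effective domain of the logarithm, not a closed constraint of the form $g_i\le 0$, so the textbook KKT statement does not apply literally. I would resolve this by passing to the closures $d_0-d_j\norm{\vv_j}^2\ge 0$ and using Proposition \ref{pro:cvxREMLEexist}: a minimiser exists and must lie in the open region where $f_0$ is finite, hence is strictly interior with respect to those constraints, so complementary slackness forces $\hat\mu_j=0$ --- which is consistent with the stated conditions and is precisely the fact exploited later in the proof of Theorem \ref{thm:eqDOOLSEMLE}. A secondary detail is merely to note that $\pder{f_0}{d_0}$ and $\pder{f_0}{d_j}$ are continuous on the open feasible set, so that stationarity is a legitimate first-order characterisation there.
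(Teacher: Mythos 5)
Your proposal follows essentially the same route as the paper: write the problem in Lagrange multiplier form with the $2l$ multipliers $(\lambdav',\muv')'$, invoke the convex KKT theorem so that the conditions are both necessary and sufficient, and obtain item (2) by direct differentiation of $L$, which matches the paper's computation exactly. The extra care you take --- verifying the (affine) constraint qualification and handling the open constraint $d_0-d_j\norm{\vv_j}^2>0$, whose strictness together with complementary slackness forces $\hat\mu_j=0$ --- is sound and merely makes explicit what the paper leaves implicit (and later uses in the proof of Theorem \ref{thm:eqDOOLSEMLE}).
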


\vspace{12pt}

Similarly as in the case of NN-(M)-DOOLSE, the Hessian  $\Hs = \nabla_{\dvv}^2 f_0(\dvv)$ equal to 

$$ 
\renewcommand{\arraystretch}{2}
\begin{pmatrix}
\tfrac{n^* - l}{d_0^2} + \Sum{j}{l} \tfrac1{{(d_j \norm{\vv_j}^2 - d_0)}^2} & 
-\tfrac{\norm{\vv_1}^2}{{(d_1 \norm{\vv_1}^2 - d_0)}^2} & -\tfrac{\norm{\vv_2}^2}{{(d_2 
\norm{\vv_2}^2 - d_0)}^2} & \ldots & 
-\tfrac{\norm{\vv_l}^2}{{(d_l \norm{\vv_l}^2 - d_0)}^2} \\
-\tfrac{\norm{\vv_1}^2}{{(d_1 \norm{\vv_1}^2 - d_0)}^2} & \tfrac{\norm{\vv_1}^4}{{(d_1 
\norm{\vv_1}^2 - d_0)}^2} & 0 & \ldots & 0 \\
-\tfrac{\norm{\vv_2}^2}{{(d_2 \norm{\vv_2}^2 - d_0)}^2} & 0 & \tfrac{\norm{\vv_2}^4}{{(d_2 
\norm{\vv_2}^2 - d_0)}^2} & \ldots & 0 \\
\vdots & \vdots & \vdots & 
\cdots & \vdots \\
-\tfrac{\norm{\vv_l}^2}{{(d_l \norm{\vv_l}^2 - d_0)}^2} & 0 & 0 & \ldots & 
\tfrac{\norm{\vv_l}^4}{{(d_l \norm{\vv_l}^2 - d_0)}^2}
\end{pmatrix} \succ0
 $$

\noindent leads to strict convexity of the problem. Since  $f_0(\dvv)$ is also coercive, we can 
summarize the existence conditions of optimal solutions in the final proposition as we did for 
NN-(M)DOOLSE problems.

\begin{proposition}[Existence of equivalent (RE)MLE]\label{pro:cvxREMLEexist}
	\leavevmode\\	
	Let assume  $\epsv \not\in \LS{\V}$. Then in a Gaussian orthogonal FDSLRM the following holds
	\begin{enumeratep}
		\item Equivalent \emph{(RE)MLE} problems are strictly convex optimization problems.
		\item The objective function $f_0(\dvv)$ is coercive with respect to constraints.
		\item Their solutions $\hat{\dvv}$ always exist and they are unique global minimizers.
	\end{enumeratep}
\end{proposition}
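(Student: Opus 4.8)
The plan is to establish the two structural facts that make the Weierstrass theorem \cite[p.~119]{bertsekas_convex_2009} applicable to the equivalent convex programs of Proposition~\ref{pro:cvxREMLE}: strict convexity of $f_0$ and compactness of its sublevel sets on the open convex feasible set $\{\dvv : d_0 > d_j\norm{\vv_j}^2,\, d_j \ge 0,\, j=1,\ldots,l\}$. Granting both, existence of a minimizer is immediate and strict convexity upgrades it to a \emph{unique global} minimizer, so claims (1)--(3) all follow at once; only coercivity needs genuine work.

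For strict convexity I would simply certify that the Hessian $\Hs = \nabla_{\dvv}^2 f_0(\dvv)$ displayed above is positive definite on the feasible set, via its rank-one decomposition
\begin{equation*}
\Hs = \frac{n^*-l}{d_0^2}\,\ej{0}\ej{0}' + \Sum{j}{l}\frac{1}{(d_0 - d_j\norm{\vv_j}^2)^2}\bigl(\ej{0} - \norm{\vv_j}^2\ej{j}\bigr)\bigl(\ej{0} - \norm{\vv_j}^2\ej{j}\bigr)',
\end{equation*}
where $\ej{0},\ldots,\ej{l}$ is the standard basis of $\Rv{l+1}$. On the feasible set every coefficient is finite and strictly positive: $n^*-l>0$ by identifiability (for MLE, $n^*=n>k+l\ge l$; for REMLE, $n^*=n-k>l$) and $\norm{\vv_j}^2>0$ since $\V$ has full column rank. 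Because $\ej{0}$ together with the vectors $\ej{0} - \norm{\vv_j}^2\ej{j}$, $j=1,\ldots,l$, span $\Rv{l+1}$, this sum of positive-semidefinite rank-one terms is positive definite, hence $f_0$ is strictly convex there.

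For coercivity — which I expect to be the main obstacle, since no single logarithmic term is coercive on its own — the idea is to pass to the variables $u_j \equiv d_0 - d_j\norm{\vv_j}^2 \in (0,d_0]$ and, using $\epsv'\V\diag{d_j}\V'\epsv = \Sum{j}{l} d_j(\epsv'\vv_j)^2$, rewrite
\begin{equation*}
f_0(\dvv) = \bigl(c\,d_0 - (n^*-l)\ln d_0\bigr) + \Sum{j}{l}\left(\frac{(\epsv'\vv_j)^2}{\norm{\vv_j}^2}\,u_j - \ln u_j\right),\qquad c \equiv \norm{\epsv}^2 - \Sum{j}{l}\frac{(\epsv'\vv_j)^2}{\norm{\vv_j}^2}.
\end{equation*}
By Bessel's inequality $c\ge 0$, with \emph{strict} inequality precisely because $\epsv \notin \LS{\V}$. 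Then for any level $\alpha$ I would argue that $f_0(\dvv)\le\alpha$ confines $(d_0,u_1,\ldots,u_l)$ to a compact subset of the feasible region: using $0<u_j\le d_0$ and $(\epsv'\vv_j)^2\ge 0$ one gets $c\,d_0 - n^*\ln d_0 \le \alpha$, which (as $c>0$) confines $d_0$ to a compact subinterval of $(0,\infty)$; once $d_0$ lies in such an interval, $u_j\le d_0$ bounds each $u_j$ above and the now-isolated term $-\ln u_j$, all other terms being bounded, bounds each $u_j$ away from $0$. Consequently every sublevel set of $f_0$ is a compact subset of the feasible set.

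Putting it together, $f_0$ is continuous with nonempty compact sublevel sets, so it attains its infimum, and strict convexity makes the minimizer $\hat{\dvv}$ unique and global; this is exactly (1)--(3). The delicate point, and where the hypothesis $\epsv\notin\LS{\V}$ enters decisively, is the coupling in the last step between the constraint $u_j\le d_0$ and the strictly positive ``Bessel gap'' $c$ multiplying $d_0$: it is this combination that restores coercivity along directions where $\epsv\perp\vv_j$ and the term $-\ln u_j$ alone would otherwise let $f_0$ tend to $-\infty$.
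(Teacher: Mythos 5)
Your proposal is correct and follows essentially the same route as the paper: strict convexity from positive definiteness of the displayed Hessian, coercivity of $f_0$, and then existence plus uniqueness of the global minimizer (the Weierstrass argument used for NN-(M)DOOLSE). The paper only asserts those two facts, whereas you actually substantiate them — the rank-one decomposition $\Hs=\tfrac{n^*-l}{d_0^2}\,\ej{0}\ej{0}'+\sum_{j}(d_0-d_j\norm{\vv_j}^2)^{-2}(\ej{0}-\norm{\vv_j}^2\ej{j})(\ej{0}-\norm{\vv_j}^2\ej{j})'$ for positive definiteness, and the change of variables $u_j=d_0-d_j\norm{\vv_j}^2$ with the strict Bessel gap $c>0$ (precisely where $\epsv\not\in\LS{\V}$ enters) giving compact sublevel sets — so it is a valid, more detailed rendering of the paper's argument.
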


\begin{proof}[Theorem \ref{thm:eqDOOLSEMLE}] \\
The bijection \eqref{def:bijection} implies $d_0-d_j \norm{\vv_j}^2 = 
(\nus_0+\nus_j\norm{\vv_j}^2)^{-1} \neq 0$ which dictates $\mu_j = 0$ to satisfy the complementary 
slackness conditions in KKT (3)  and simultaneously forms the new version of KKT (1-3) of 
Proposition \ref{pro:KKTcvxREMLE}
\begin{align} 
||\epsv||^2-(n^{*}-l)\nus_0-\Sum{j}{l}(\nus_0+\nus_j\norm{\vv_j}^2)& =0,\label{eq:newKKT1} \\
||\vv_j||^2(\nus_0+\nus_j\norm{\vv_j}^2)-(\epsv'\vv_j)^2-\lambda_j & =0,\,\,j\in\{1,\ldots,l\} 
\label{eq:newKKT2}\\[6pt]
 \nus_0>0, \, -\nus_j \lambda_j = 0, \,  \nus_j \geq 0,  \, \lambda_j & \geq  0, 
 \,\,j\in\{1,\ldots,l\} \label{eq:newKKT3}
\end{align}
But for $\epsv \not\in \LS{\V}$, which occurs with probability 1, the system 
\eqref{eq:newKKT1}-\eqref{eq:newKKT3} is the same as the system \eqref{eq:KKT0}-\eqref{eq:slack},
which is what we set out to prove.\end{proof}

\bibliographystyle{spmpsci}      
\bibliography{references}   

%
%

\end{document}